\pgfplotsset{compat = newest}
\theoremstyle{plain}
\newtheorem{theorem}{Theorem}
\newtheorem{lemma}{Lemma}
\theoremstyle{plain}
\newtheorem{definition}{Definition}
\newtheorem{example}{Example}
\newtheorem{assumption}{Assumption}
\theoremstyle{plain}
\newcommand{\cost}{C}
\newcommand{\costagent}{C^A}
\newcommand{\alloc}{\pi}
\newcommand{\price}{p}
\newcommand{\Util}{U}
\newcommand{\Rev}{{\rm Rev}}
\newcommand{\opt}[1]{{\rm OPT}(#1)}
\newcommand{\virtual}{\phi}
\newcommand{\quant}{q}
\newcommand{\var}{{\rm Var}}
\newcommand{\type}{\theta}
\newcommand{\types}{\Theta}
\newcommand{\monopt}{\type^*}
\newcommand{\experiment}{\hat{\sigma}}
\newcommand{\experiments}{\hat{\Sigma}}
\newcommand{\prior}{D}
\newcommand{\posterior}{\mu}
\newcommand{\state}{\omega}
\newcommand{\states}{\Omega}
\newcommand{\V}{V}
\newcommand{\val}{v}
\newcommand{\signal}{\sigma}
\newcommand{\signals}{\Sigma}
\newcommand{\signalspace}{S}
\newcommand{\signalspacea}{S}
\newcommand{\mech}{\mathcal{M}}
\newcommand{\s}{s}
\newcommand{\sa}{s}
\newcommand{\entropy}{H}
\newcommand{\actions}{A}
\newcommand{\action}{a}
\newcommand{\dist}{F}
\newcommand{\density}{f}
\newcommand{\full}{^F}
\newcommand{\no}{^N}
\newcommand{\fullsignal}{\signal\full}
\newcommand{\nosignal}{\signal\no}
\newcommand{\fullexp}{\signal\full}
\newcommand{\noexp}{\signal\no}
\newcommand{\g}{\mathcal{N}}
\newcommand{\ir}{c}
\newcommand{\x}{x}
\newcommand{\bx}{\bar{\x}}
\newcommand{\pp}{{\rm PP}}
\DeclareMathOperator{\argmax}{argmax}
\newcommand{\given}{|}
\newcommand{\prob}[2][]{\text{\bf Pr}\ifthenelse{\not\equal{}{#1}}{_{#1}}{}\!\left[{\def\givenn{\middle|}#2}\right]}
\newcommand{\expect}[2][]{\text{\bf E}\ifthenelse{\not\equal{}{#1}}{_{#1}}{}\!\left[{\def\givenn{\middle|}#2}\right]}
\newcommand{\tparen}{\big}
\newcommand{\tprob}[2][]{\text{\bf Pr}\ifthenelse{\not\equal{}{#1}}{_{#1}}{}\tparen[{\def\given{\tparen|}#2}\tparen]}
\newcommand{\texpect}[2][]{\text{\bf E}\ifthenelse{\not\equal{}{#1}}{_{#1}}{}\tparen[{\def\given{\tparen|}#2}\tparen]}
\newcommand{\sprob}[2][]{\text{\bf Pr}\ifthenelse{\not\equal{}{#1}}{_{#1}}{}[#2]}
\newcommand{\sexpect}[2][]{\text{\bf E}\ifthenelse{\not\equal{}{#1}}{_{#1}}{}[#2]}
\newcommand{\rbr}[1]{\left(#1\right)}
\newcommand{\dd}{{\,\mathrm d}}
\newcommand{\indicate}[1]{{\bf 1}\left[#1\right]}
\newcommand{\reals}{\mathbb{R}}
\let\oldparagraph\paragraph
\renewcommand{\paragraph}[1]{\oldparagraph{#1.}}
\providecommand{\keywords}[1]
{
  \small	
  \textbf{\textit{Keywords---}} #1
}
\begin{document}


\title{Selling Data to an Agent with Endogenous Information\thanks{The author thanks NSF CCF AitF-1733860 and NSF SES-1947021 for financial support. 
The two-page abstract of this paper appeared in the Proceedings of the 23rd ACM Conference on Economics and Computation (EC'22).
The author thanks Yiding Feng, Yingni Guo, Jason Hartline, Rad
Niazadeh, Boli Xu and participants of the Stony Brook International Conference on Game Theory for helpful comments and suggestions.}}
\author{Yingkai Li\thanks{Cowles Foundation for Research in Economics, Yale University.
Email: \texttt{yingkai.li@yale.edu}}}
\date{}

\maketitle

\begin{abstract}
We consider a model of a data broker selling information to a single agent to maximize his revenue. The agent has a private valuation of the additional information, and upon receiving the signal from the data broker, the agent can conduct her own experiment to refine her posterior belief on the states with additional costs. To maximize expected revenue, only offering full information in general is suboptimal, and the optimal mechanism may contain a continuum of menu options with partial information to prevent the agent from having incentives to acquire additional information from other sources. However, our main result shows that the additional benefit from price discrimination is limited, i.e., posting a deterministic price for revealing full information obtains at least half of the optimal revenue for arbitrary prior and cost functions. 
\end{abstract}

\keywords{selling data, endogenous information, revenue maximization, posted pricing}

\section{Introduction}\label{sec:intro}

Information can help decision makers refine their knowledge and make better decisions 
when there is uncertainty regarding their environments. 
The desire for additional information creates opportunities for data brokers to collect payments from decision makers. 
There is a large market for selling information 
such as selling consumer data (e.g., Acxiom and Datalogix),
user profiles (e.g., Facebook),
credit reports (e.g., Experian, TransUnion),
or cookies from web usage \citep*[c.f.,][]{bergemann2015selling}.
Data brokers extract considerable revenue from providing valuable information to decision makers.

In this paper, we consider the problem of maximizing the revenue of the data broker, 
where the agent can endogenously acquire additional information. 
Specifically, there is an unknown state, and 
both the data broker and the agent have a common prior over the set of possible states.
The data broker can offer a menu of experiments for revealing the states with associated prices to the agent. 
Then, the agent selects the expected utility maximizing entry from the menu 
and pays the corresponding price to the data broker.
The agent has a private valuation of information
and can acquire additional costly information upon receiving the signal from the data broker. 
The literature has acknowledged the possibility that agents can conduct their own experiments to be privately informed of states \citep*[e.g.,][]{bergemann2018design}.
The distinct feature in our model is that the decision of whether to acquire additional information is endogenous. 
Specifically, after receiving the signal from the data broker, 
the agent can subsequently acquire additional information at a cost. 
For example, the agent is a decision maker who makes investment decisions based on the current state of the financial market. 
The agent will first acquire information from the data broker (for example, a consulting firm), 
and based on her posterior, 
she can potentially repeatedly collect additional information from
other sources (e.g., social media) before making her investment decision.

Another example captured by our model is where the agent is a firm that sells products to consumers, 
and the information the data broker provides is a market segmentation of consumers. 
The firm has a private and convex cost for producing different levels
of quality of its product, 
and the firm can conduct experiments (e.g., sending surveys to potential consumers) with additional costs
to further segment the market after receiving the information from the data broker.\footnote{\citet*{yang2020selling} studies a similar model where the firm cannot conduct its own market research to refine its knowledge about consumers
and has linear costs for producing the goods.} 
In addition, in our model, 
we allow the firm to repeat the market research until it is not beneficial to do so, 
i.e., when the cost of information exceeds the marginal benefits of information. 
This captures the situation where 
the firm can decide the date for announcing a product to the market, 
and before the announcement, 
the firm sends out surveys to potential consumers each day to learn the segmentation of the market. 
At the end of each day, 
the firm receives an informative signal through the survey
and decides whether to continue the survey the next day
or stop the survey and announce the product with corresponding market prices to the public.

We fully characterize the revenue-optimal mechanism 
under the assumption of linear valuation, 
i.e., the value of the agent for any posterior distribution
is simply the product of her private type 
and the value of the posterior distribution.\footnote{We briefly
  discuss the general properties of the optimal mechanisms without the linearity assumption in \cref{apx:general}.} 
In the examples we provided in previous paragraphs, 
both the decision maker who chooses an optimal action to maximize her payoff based on the posterior belief 
and the firm that sells products to consumers to maximize its revenue 
satisfy the linear valuation assumption.
Essentially, this condition assumes that the private type of the agent represents her value of additional information, 
and there is a linear structure of this preference. 
It excludes the situation where the private type of the agent represents an exogenous private signal correlated with the state \citep*[e.g.,][]{bergemann2018design}.

With linear valuations, 
when the agent can acquire additional costly information,
there exists a threshold type $\monopt$ such that 
(1) for any type $\type\geq \monopt$, 
the optimal mechanism reveals full information to the agent,
and (2) for any type $\type < \monopt$, the optimal mechanism may reveal partial information
and the individual rationality constraint always binds. 
The first statement is the standard no distortion at the top observation in the optimal mechanisms.  
The second statement suggests that the optimal mechanism may
discriminate against lower types of the agent 
by offering the option of revealing partial information to the agent at lower prices. 
Moreover, the allocations and prices for those lower types are set such that the agent is exactly indifferent between participation and choosing the outside option (by conducting her own experiments at additional costs). 
This is a clear distinction from selling information without endogenous information 
where the individual rationality constraint for the lowest type is sufficient for characterizing the optimal mechanism in the latter case. 

Our characterizations suggest that the optimal mechanisms for selling information may be complex and contain a continuum of menu options. 
Moreover, it would require the seller to have the ability to commit to using randomization for revealing partial information.
A criticism of such complex mechanisms is their applicability in practice compared to simple but potentially suboptimal mechanisms. 
There is a long line of work providing theoretical justifications for simple mechanisms in practice
by demonstrating the approximate optimality of simple mechanisms when the optimal mechanisms are complex, 
e.g., \citet*{hartline2009simple,har-12,babaioff2020simple}.
In the setting of selling information, 
a natural candidate for simple mechanisms is posting a deterministic price for revealing full information. 
Under the assumption of linear valuation, 
we show that when the prior distribution over states is sufficiently informative,
posted pricing for revealing full information
is the revenue-optimal mechanism.
Moreover, without any restriction on the prior distribution or the cost function,
posted pricing for revealing full information is suboptimal but achieves at least half of the optimal revenue in the worst case.
This approximation result indicates that price discrimination and the ability to provide partial information are not salient features for data selling mechanisms to approximate the optimal revenue.

Finally, we apply our characterization of the optimal mechanism to several economic applications, e.g.,
when the data buyer is a decision maker minimizing the loss of making predictions
or a firm selling products to consumers to maximize profits. 
Our numerical results show that in those applications, the revenue
loss from pricing for full information could be significantly smaller
than our worst-case bound. 
This provides strong justification for adopting simple pricing mechanisms in practice.

\subsection{Related Work}\label{sec:related}

There is a large literature on selling information to agents with uncertainty over states. 
Those papers can be classified into two categories according to the agents' private types. 
The first category is when the agents' private types
represent their willingness to pay for different experiments \citep*[e.g.,][]{yang2020selling,smolin2020disclosure,liu2021optimal}. 
In this case, the private types of the agents are assumed to be independent of the realization of the state, 
and hence the private types do not affect the belief updating process for receiving the signals. 
The second category is when the agents' private types
represent their private signals that are informative about the states \citep*[e.g.,][]{admati1986monopolistic,admati1990direct,bergemann2018design,bergemann2021is}. 
In this case, agents have heterogeneous prior beliefs about the states, 
and they will update their posteriors accordingly upon receiving the signals from the seller. 
Note that in both lines of work, the private types of the agents are given exogenously, 
and hence it is a pure adverse selection model. 
Our paper assumes that the private types of the agents
represent their preferences for different experiments,
which are assumed to be independent of the realization of the states. 
The distinct feature is that we allow agents to endogenously acquire costly signals that are informative about the unknown states.
Thus, the main focus of our paper is the interaction between adverse selection and moral hazard, 
and we will provide characterizations of the optimal mechanisms in this setting.

Our paper is relevant to the literature on mechanism design with endogenous information.
\citet*{cremer1992gathering} considers a model of endogenous information in a contract design model. 
The main distinction between their model and our paper is the timeline of the agent. 
In their paper, the agent gathers information before signing the contract, 
while in our model, the agent can observe additional information after the interaction with the data broker. 
This difference in timeline also distinguishes our model from the literature on auction with endogenous entry \citep*{menezes2000auctions}
and auction with costly information acquisition for buyers \citep*{bergemann2002information,shi2012optimal,li2019efficient,mensch2022screening}, 
where those papers assume that the agents make the information acquisition decision before interacting with the seller,
and the mechanism offered by the seller distorts the agents' incentives to learn their valuation.
\citet*{argenziano2016strategic} analyzed a strategic communication setting where there are misaligned interests between the decision maker and a biased expert. 
They show that the decision maker may prefer a decision based on the advice of the expert even if the cost of information is lower for the decision maker. 
In the persuasion setting, \citet*{matyskova2023bayesian} show that the ability to acquire additional information limits the sender's ability to persuade 
and, perhaps surprisingly, a lower cost of information could be
undesirable for the receiver since it may induce a less informative signal from the sender. 
\citet*{bloedel2020persuading,wei2021persuasion} also consider a persuasion setting where the receiver's information is endogenous. 
However, in their model, the endogeneity arises because it is costly for the receiver to process the information transmitted from the sender.

The model in this paper also contributes to the broader domain of information design and Bayesian persuasion \citep*[e.g.,][]{rayo2010optimal,kamenica2011bayesian}, 
particularly the disclosure of information in auctions. 
\citet*{esHo2007optimal} and \citet*{li2017discriminatory} consider the setting of selling information to the buyer before the auction 
to maximize revenue, 
and \citet*{wei2020reverse} consider the variant where the buyer can walk away without paying the seller 
after receiving the information.
\citet*{bergemann2015selling} models the problem of a data provider who sells cookies to match firms with customers, 
and \citet*{bergemann2015limits} explores the set of possible outcomes
that can be implemented by in a monopoly model by providing the segmentations of the customers.
For a detailed discussion on information in markets, see the survey by \citet*{bergemann2019markets}.

\section{Model}\label{sec:model}
There is a single agent making decisions facing uncertainly over the state space $\states\subseteq\reals$.
Let~$\prior$ be the prior distribution over the states. 
The agent has a private type $\type \in \types=[\underline{\type},\bar{\type}]\subseteq\reals_+$
drawn from a commonly known distribution $\dist$ with differentiable density function $\density$ that is strictly positive everywhere in the support.
The expected utility of the agent given the posterior belief $\posterior \in \Delta(\states)$ is 
$\V(\posterior,\type)$ when her type is $\type$.
We assume that $\V$ is convex in $\posterior$ for any type~$\type$.\footnote{The convexity ensures that for any type $\type$, the agent has higher value for Blackwell more informative experiments. 
This condition is satisfied whenever the agent is a decision maker who maximizes his expected utility. }
There is a data broker who attempts to sell information to the agent to maximize his profit
by committing to an experiment that signals the state. 
Note that an experiment is a mapping $\signal:\Omega \to \Delta(\signalspace)$, 
where $\signalspace$ is the signal space.
Let $\signals$ be the set of all possible experiments.

In this paper, upon receiving a signal $\s\in \signalspace$,
the agent can conduct her own experiment to further refine her posterior belief on the state
for additional costs. 
Let $\experiments\subseteq\signals$ be the set of possible experiments that can be conducted by the agent. 
The cost of experiment $\signal$ given posterior belief $\posterior$ of the agent
is denoted by $\costagent(\signal, \posterior)\geq 0$.
Let $\fullexp$ be the experiment that reveals full information, i.e., $\fullexp(\state)=\state$ for any $\state\in\states$, 
and let $\noexp$ be the null experiment that reveals no information with zero cost. 
In this paper, we assume that $\noexp\in\experiments$,\footnote{ Intuitively, this assumes that the agent can always choose not to conduct any additional experiment 
and pays no extra cost. }
$\costagent(\noexp, \posterior) > 0$, 
and $\costagent(\signal, \posterior) > 0$
for any $\signal\neq\noexp$.

A mechanism of the data broker is a menu of experiments and associated prices $\{(\signal_i,\price_i)\}$.\footnote{It is without loss of generality to restrict attention to offering a deterministic experiment in each menu entry
since any distribution over experiments can be equivalently
represented by a deterministic experiment with an augmented signal space.}
The timeline of the model is illustrated as follows. 
\begin{enumerate}
\item The data broker commits to a mechanism $\mech=\{(\signal_i,\price_i)\}$.
\item The agent chooses entry $(\signal,\price) \in \mech$,
and pays price $\price$ to the data broker.
Then, state $\state\in \states$ is realized 
according to prior $\prior$,
and the data broker sends signal $\s \sim \signal(\state)$ to the agent. 
\item Upon receiving signal $\s$, 
the agent forms posterior belief $\posterior$, 
chooses an experiment $\experiment\in\experiments$, 
and pays cost $\costagent(\experiment,\posterior)$. 
\item The agent receives a signal $\hat{\sa} \sim \experiment(\state)$, 
forms refined posterior belief $\hat{\posterior}$, 
and receives expected reward $\V(\hat{\posterior},\type)$.
\end{enumerate}

By the revelation principle, it is without loss to consider the revelation mechanism, 
that is, the data broker commits to a mapping from types to experiments $\alloc:\types \to \signals$
and the expected payment rule $\price: \types \to \reals$.
By slightly overloading the notation, 
denote 
\begin{align*}
\V(\posterior, \experiments, \type) \triangleq 
\max_{\experiment\in\experiments} \,
\expect[\hat{\posterior} \sim \experiment\given\posterior]{\V(\hat{\posterior}, \type)}
- \costagent(\experiment,\posterior)
\end{align*}
as the maximum utility of the agent given belief $\posterior$, 
the set of possible experiments $\experiments$,
and private type $\type$.
Here, the notation $\experiment\given\posterior$ represents the distribution over posteriors induced by experiment $\experiment$ given prior belief $\posterior$.

\begin{definition}\label{def:ic ir}
Mechanism $\mech=(\alloc, \price)$ is \emph{incentive compatible} if 
for any type $\type, \type'\in \types$, we have 
\begin{align*}
\expect[\posterior\sim\alloc(\type)\given\prior]{
\V(\posterior, \experiments, \type)} - \price(\type)
\geq \expect[\posterior\sim\alloc(\type')\given\prior]{
\V(\posterior, \experiments, \type)} - \price(\type'),
\end{align*}
and mechanism $(\alloc, \price)$ is \emph{individually rational} if for any type $\type \in \types$, we have 
\begin{align*}
\expect[\posterior\sim\alloc(\type)\given\prior]{
\V(\posterior, \experiments, \type)} - \price(\type)
\geq \V(\prior, \experiments, \type).
\end{align*}
\end{definition}
In this paper, without loss of generality, 
we focus on mechanisms $(\alloc, \price)$ that are incentive compatible 
and individually rational.
The goal of the data broker is to maximize expected revenue
$\Rev(\mech) \triangleq \expect[\type\sim \dist]{\price(\type)}$.

For any experiment $\experiment \in \experiments$ 
and any mapping $\kappa: \signalspacea \to \experiments$, 
let $\kappa \circ \experiment$
represent the experiment such that the agent first conducts experiment $\experiment$, 
and conditional on receiving the signal $\sa\in\signalspacea$, 
the agent continues with experiment $\kappa(\sa)$ to further refine her posterior belief. 
For any belief~$\posterior$, let $\hat{\posterior}_{\experiment, \sa, \posterior}$ be the posterior belief of the agent when she conducts experiment~$\experiment$
and receives signal $\sa$. 
Throughout this paper, we make the sequential learning proof assumption from \citep{bloedel2020cost} on the set of possible experiments and the cost function. 

\begin{assumption}[\citealp{bloedel2020cost}]\label{asp:concatenation}
For any experiment $\experiment \in \experiments$
and any mapping $\kappa: \signalspacea \to \experiments$, 
we have $\kappa \circ \experiment \in \experiments$. 
Moreover, for any belief $\posterior$, 
we have 
\begin{align*}
\costagent(\kappa \circ \experiment, \posterior) \leq
\costagent(\experiment, \posterior)
+ \int_{\states} \int_{\signalspacea} 
\costagent(\kappa(\sa), \hat{\posterior}_{\experiment, \sa,\posterior})
\dd \experiment(\sa\given \state) \dd \posterior(\state).
\end{align*}
\end{assumption}

Intuitively, \cref{asp:concatenation} assumes that the set of possible experiments is closed under sequential learning, 
and the cost function exhibits a preference for one-shot learning.\footnote{\citet*{bloedel2020cost} provide a characterization for the cost function to be indifferent for one-shot learning with additional regularity assumptions. 
In our paper, we only need to assume a weak preference for one-shot learning, 
and the additional regularity assumptions in \citet*{bloedel2020cost} are not needed here.
}
This captures the scenario where the agent can repeatedly conduct feasible experiments based on her current posterior belief. 
Next, we illustrate several examples that satisfy the above assumptions. 
\begin{itemize}
\item $\experiments$ is a singleton. 
In this case, the unique experiment $\noexp \in \experiments$ is null experiment with zero cost.

\item $\experiments$ is the set of all possible experiments, i.e., $\experiments = \signals$. 
The cost function $\costagent$ is the reduction in uncertainty, 
i.e., 
$\costagent(\experiment,\posterior) = \entropy(\posterior) - \expect[\hat{\posterior}\sim \experiment\given\posterior]{\entropy(\hat{\posterior})}$
where $\entropy$ is a concave function that measures the uncertainty. 
Possible choices of function $\entropy$ include the entropy function \citep*[e.g.,][]{sims2003implications}
or more generally the uniformly posterior separable cost functions \citep*[e.g.,][]{caplin2022rationally}.
It is easy to verify that uniformly posterior separable cost functions satisfy \cref{asp:concatenation}.

\item $\experiments$ is the set of experiments generated by $\noexp$ and $\experiment'$ through sequential learning, 
where $\noexp$ is experiment that reveals no additional information with zero cost,
and $\experiment'$ is an informative experiment that signals the state with fixed cost, 
i.e., there exists a constant $c>0$
such that $\costagent(\experiment', \posterior) = c$
for all posteriors $\posterior$.
In this case, the agent can choose experiment $\experiment'$ as long as it is beneficial for her given her current belief $\posterior$,
and in total the agent pays cost $c$ multiplied by the number of times that experiment $\experiment'$ is conducted \citep[e.g, ][]{wald1945sequential}.
\end{itemize}

Note that although the general results in our paper do not require any additional assumption on the valuation function, 
we will consider the following class of valuation functions in \Cref{sec:opt} to obtain more structural results on the optimal mechanism.
Essentially, we will focus on the setting where the private type of the agent represents her value from acquiring additional information. 
In particular, the valuation function of the agent is linear. 
\begin{definition}\label{def:linear}
The valuation $\V(\posterior,\type)$ is \emph{linear} 
if there exists a function $\val(\posterior)$ 
such that
$\V(\posterior,\type) = \val(\posterior) \cdot \type$ 
for any posterior $\posterior$ and any type $\type$.
\end{definition}

Note that this linear valuation assumption does not trivialize the
problem since the induced valuation function
$\V(\posterior, \experiments, \type)$ is still not linear in type
$\type$.  
Next, we introduce two canonical settings that satisfy the linear valuation assumption. 

\begin{example}\label{exp:match action}
Consider the model of a decision maker who makes a prediction over states~$\states$.
In our paper, the agent is the decision maker who chooses an action from action space $\actions$
to maximize her payoff. 
There are several payoff functions of the decision maker that are commonly considered in the literature. 
\begin{itemize}
\item Matching utilities: in this case, the state space and action space are finite, and $\states = \actions = \{1,\dots,n\}$.
The agent gains positive utility if the chosen action matches the state, 
i.e., the utility of the agent 
is $u(\action,\state;\type) = \type \cdot \indicate{\action = \state}$, 
where $\indicate{\cdot}$ is the indicator function
and~$\type$ is the private type of the agent.\footnote{In the special case where $\type = 1$ with probability 1,
this utility function is the matching utility considered in \citet*{bergemann2018design}. 
Note that in \citet*{bergemann2018design}, the agent has an exogenous private signal that is informative about the state, 
while in our model, that private signal is assumed to be endogenous.}
Given belief $\posterior$,
when the agent chooses her action optimally, 
her expected utility is 
$\V(\posterior, \type) = \type \cdot \max_{\state\in\states} \posterior(\state)$. 
Thus, by letting $\val(\posterior) = \max_{\state\in\states} \posterior(\state)$, 
the valuation of the agent is linear 
and $\V(\posterior,\type) = \val(\posterior) \cdot \type$
for any posterior $\posterior$
and type $\type$.

\item Error minimization: in this case, $\states = \actions \subseteq \reals$, 
and the agent minimizes the square error between the chosen action and the true state, 
i.e., the utility of the agent 
is $u(\action,\state;\type) = - \type \cdot (\action - \state)^2$.
Given belief $\posterior$, the optimal choice of the agent is $\expect[]{\state}$,
and her expected utility 
$\V(\posterior, \type) = -\type \cdot \var(\posterior)$, 
where $\var(\posterior)$ is the variance of distribution $\posterior$. 
Thus, by letting $\val(\posterior) = -\var(\posterior)$, 
the valuation of the agent is linear 
and $\V(\posterior,\type) = \val(\posterior) \cdot \type$
for any posterior $\posterior$
and type $\type$.
\end{itemize}
\end{example}

\begin{example}\label{exp:firm auction}
Consider the model of monopoly auction in \citet*{mussa1978monopoly}.
In this example, the agent is a firm selling a product to a consumer with private values for different quality levels of the product.
The state space $\states = \reals_+$ represents the space of consumer valuations. 
The firm has private cost parameter~$c$,
and the cost of producing a product with quality $q$ is $c\cdot q^2$.\footnote{\citet*{yang2020selling} considers a similar setting with linear cost function $c\cdot q$.}
We first provide an informal argument by assuming that the distribution $\posterior$ is regular, 
i.e., the virtual value function $\virtual_{\posterior}(z) = z - \frac{1-F_{\posterior}(z)}{f_{\posterior}(z)}$ 
is nondecreasing in~$z$,
where $F_{\posterior}$ and $f_{\posterior}$ are the cumulative function 
and density function given posterior belief~$\posterior$. 
The optimal mechanism of the firm with cost $c$ is to provide the product with quality
$q(z) = \frac{\max\{0,\virtual_{\posterior}(z)\}}{2c}$
to the agent with value $z$,
and the expected profit of the firm is
\begin{align*}
\int_{\reals_+} \frac{\max\{0,\virtual_{\posterior}(z)\}^2}{4c} \dd \posterior(z).
\end{align*}
Let $\type = \frac{1}{c}$ be the private type of the firm, 
and let $\val(\posterior) = \frac{1}{4}\int_{\reals_+} \max\{0,\virtual_{\posterior}(z)\}^2 \dd \posterior(z)$.
The valuation function 
is $\V(\posterior,\type) = \val(\posterior)\cdot \type$
given any type $\type$ and any belief $\posterior$, 
which satisfies the linearity condition. 
Note that this argument is not formal since the posterior $\posterior$ induced by the experiment is endogenously determined by the principal, and $\posterior$ might be irregular. 
We provide a formal argument in \cref{apx:monopoly_exp}.
\end{example}

\section{Optimal Mechanisms}
\label{sec:opt}
In this section, we characterize the optimal mechanisms for the setting with linear valuation functions 
under a mild regularity assumption on the agent's type distribution. 
This regularity assumption has been widely adopted in the auction design literature since \citet*{myerson1981optimal}.
Let $\phi(\type) = \type - \frac{1-\dist(\type)}{\density(\type)}$ 
be the virtual value function of the agent. 
\begin{assumption}\label{asp:regular}
The distribution $\dist$ is \emph{regular},
i.e., the corresponding virtual value function $\phi(\type)$ is monotone nondecreasing in $\type$.
\end{assumption}

Let monopoly type $\monopt = \inf_{\type} \{\virtual(\type) \geq 0\}$
be the lowest type with virtual value $0$.

In this section, we characterize the optimal revenue of the data broker using the envelope theorem \citep*{milgrom2002envelope}. 
For an agent with private type $\type$, the interim utility given revelation mechanism $\mech=(\alloc,\price)$ is 
\begin{align*}
\Util(\type) = \expect[\posterior\sim\alloc(\type)\given\prior]{
\V(\posterior, \experiments, \type)} - \price(\type).
\end{align*}
The derivative of the utility is 
\begin{align*}
\Util'(\type) = \expect[\posterior\sim\alloc(\type)\given\prior]{
\V_3(\posterior, \experiments, \type)}
= \expect[\posterior\sim\alloc(\type)\given\prior]{
\expect[\hat{\posterior}\sim \experiment_{\type,\posterior}\given\posterior]{\val(\hat{\posterior})}}
\end{align*}
where $\experiment_{\type,\posterior} \in \experiments$ is the optimal experiment for the agent given private type $\type$ and belief $\posterior$
and $\V_3(\posterior, \experiments, \type)$ is the partial derivative on the third coordinate. 
Thus, we have 
\begin{align*}
\Util(\type) = \int_{\underline{\type}}^{\type} 
\expect[\posterior\sim\alloc(z)\given\prior]{
\expect[\hat{\posterior}\sim \experiment_{z,\posterior}\given \posterior]{\val(\hat{\posterior})}} \dd z
+ \Util(\underline{\type}).
\end{align*}
Then, the revenue of the data broker is 
\begin{align}
\Rev(\mech) &= \expect[\type\sim\dist]{
\expect[\posterior\sim\alloc(\type)\given\prior]{
\V(\posterior, \experiments, \type)} 
- \int_{\underline{\type}}^{\type} 
\expect[\posterior\sim\alloc(z)\given\prior]{
\expect[\hat{\posterior}\sim \experiment_{z,\posterior}\given \posterior]{\val(\hat{\posterior})}} \dd z
- \Util(\underline{\type})}\nonumber\\
&= \expect[\type\sim\dist]{\expect[\posterior\sim\alloc(\type)\given\prior]{
\V(\posterior, \experiments, \type) 
- \frac{1-\dist(\type)}{\density(\type)}\cdot
\expect[\hat{\posterior}\sim \experiment_{\type,\posterior}\given\posterior]{\val(\hat{\posterior})} }}
- \Util(\underline{\type})\nonumber\\
&= \expect[\type\sim\dist]{
\expect[\posterior\sim\alloc(\type)\given\prior]{
\virtual(\type) \cdot \expect[\hat{\posterior}\sim \experiment_{\type,\posterior}\given\posterior]{
\val(\hat{\posterior})}
- \costagent(\experiment_{\type,\posterior},\posterior)}}
- \Util(\underline{\type}),
\label{eq:rev}
\end{align}
where the second equality holds by integration by parts.
The next lemma provides sufficient and necessary conditions on the allocations such that the resulting mechanism is incentive compatible and individually rational.

\begin{restatable}{lemma}{lemincentive}\label{lem:incentive}
An allocation rule $\alloc$ can be implemented by an incentive compatible and individually rational mechanism 
that ensures the lowest type with utility $\Util(\underline{\type})$
if and only if for any $\type, \type'\in \types$,\footnote{If $\type < \type'$, 
we use $\int_{\type'}^{\type}$ to represent $-\int_{\type}^{\type'}$.}
\begin{align}
\int_{\type'}^{\type} \expect[\posterior\sim\alloc(z)\given\prior]{\expect[\hat{\posterior}\sim \experiment_{z,\posterior}\given \posterior]{\val(\hat{\posterior})}} 
- \expect[\posterior\sim\alloc(\type')\given\prior]{\expect[\hat{\posterior}\sim \experiment_{z,\posterior}\given \posterior]{\val(\hat{\posterior})}} \dd z 
\geq 0,& \tag{IC}\\
\int_{\underline{\type}}^{\type} 
\expect[\posterior\sim\alloc(z)\given\prior]{
\expect[\hat{\posterior}\sim \experiment_{z,\posterior}\given \posterior]{\val(\hat{\posterior})}} \dd z
+ \Util(\underline{\type})
\geq \V(\prior, \experiments, \type).&\tag{IR}
\end{align}
\end{restatable}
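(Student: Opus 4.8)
The plan is to prove \cref{lem:incentive} by the standard envelope/monotonicity argument adapted to the information-selling setting, treating the quantity $Q(\type) \triangleq \expect[\posterior\sim\alloc(\type)\given\prior]{\expect[\hat{\posterior}\sim \experiment_{\type,\posterior}\given\posterior]{\val(\hat{\posterior})}}$ as the analogue of the ``allocation'' in a single-item auction. The key observation is that, because the valuation is separable, the map $\type \mapsto \expect[\posterior\sim\alloc(\type')\given\prior]{\V(\posterior,\experiments,\type)}$ is, for each fixed $\type'$, a maximum over experiments of affine functions of $\type$ (the agent's downstream experiment choice $\experiment_{z,\posterior}$ depends on $z$, but for a fixed reported menu entry the payoff $\expect{\val(\hat\posterior)}\cdot\type - \costagent$ is piecewise affine and convex in the true type $\type$), hence convex in $\type$ with subgradient equal to $\expect[\posterior\sim\alloc(\type')\given\prior]{\expect[\hat{\posterior}\sim \experiment_{\type,\posterior}\given \posterior]{\val(\hat{\posterior})}}$.

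For the ``only if'' direction I would start from the IC inequality in \cref{def:ic ir} written for the pair $(\type,\type')$ and the reverse pair $(\type',\type)$, add them, and cancel prices; this yields that $Q$ must satisfy the integrated monotonicity condition stated in (IC). More carefully: from IC, $\Util(\type) = \max_{\type'} \big( \expect[\posterior\sim\alloc(\type')\given\prior]{\V(\posterior,\experiments,\type)} - \price(\type')\big)$, so $\Util$ is convex in $\type$ with $\Util'(\type) = Q(\type)$ a.e.\ by the envelope theorem \citep{milgrom2002envelope}, giving $\Util(\type) = \int_{\underline\type}^{\type} Q(z)\dd z + \Util(\underline\type)$. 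Substituting this expression for $\Util(\type)$ and for $\Util(\type')$ back into the IC inequality and rearranging produces exactly the displayed (IC) condition (the term $\expect[\posterior\sim\alloc(\type')\given\prior]{\expect[\hat{\posterior}\sim \experiment_{z,\posterior}\given \posterior]{\val(\hat{\posterior})}}$ inside the integral arises from writing $\expect[\posterior\sim\alloc(\type')\given\prior]{\V(\posterior,\experiments,\type)} - \expect[\posterior\sim\alloc(\type')\given\prior]{\V(\posterior,\experiments,\type')}$ as an integral over $z$ from $\type'$ to $\type$ of the derivative $\expect[\posterior\sim\alloc(\type')\given\prior]{\V_3(\posterior,\experiments,z)}$, using that the downstream optimal experiment at true type $z$ on belief $\posterior$ is $\experiment_{z,\posterior}$). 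The (IR) condition then follows by plugging the integral form of $\Util(\type)$ into the individual-rationality inequality $\Util(\type) \geq \V(\prior,\experiments,\type)$ of \cref{def:ic ir}.

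For the ``if'' direction, given an allocation $\alloc$ satisfying (IC) and (IR), I would construct prices explicitly by setting $\Util(\underline\type)$ to be any value making (IR) hold at $\underline\type$ (for instance the smallest such value, namely $\sup_\type \big(\V(\prior,\experiments,\type) - \int_{\underline\type}^\type Q(z)\dd z\big)$, which is finite and nonnegative by (IR)), defining $\Util(\type) = \int_{\underline\type}^\type Q(z)\dd z + \Util(\underline\type)$, and then $\price(\type) = \expect[\posterior\sim\alloc(\type)\given\prior]{\V(\posterior,\experiments,\type)} - \Util(\type)$. One then checks IC directly: the gain from misreporting $\type'$ is $\Util(\type) - \big(\expect[\posterior\sim\alloc(\type')\given\prior]{\V(\posterior,\experiments,\type)} - \price(\type')\big) = \int_{\type'}^\type Q(z)\dd z - \big(\expect[\posterior\sim\alloc(\type')\given\prior]{\V(\posterior,\experiments,\type)} - \expect[\posterior\sim\alloc(\type')\given\prior]{\V(\posterior,\experiments,\type')}\big)$, and rewriting the parenthesized difference as $\int_{\type'}^\type \expect[\posterior\sim\alloc(\type')\given\prior]{\expect[\hat{\posterior}\sim \experiment_{z,\posterior}\given \posterior]{\val(\hat{\posterior})}}\dd z$ shows this equals exactly the left side of (IC), hence nonnegative. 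IR holds by the choice of $\Util(\underline\type)$ together with the fact that (IR) is imposed for all $\type$.

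The main obstacle I anticipate is justifying the envelope-theorem step and the identity $\frac{\partial}{\partial \type}\expect[\posterior\sim\alloc(\type')\given\prior]{\V(\posterior,\experiments,\type)} = \expect[\posterior\sim\alloc(\type')\given\prior]{\expect[\hat{\posterior}\sim \experiment_{z,\posterior}\given \posterior]{\val(\hat{\posterior})}}$ with full rigor: here $\V(\posterior,\experiments,\type)$ is itself a value function (an inner maximization over $\experiment\in\experiments$, which need not be a compact set and over which the objective need not be continuous), so I need the conditions of \citet{milgrom2002envelope} — absolute continuity of the value in $\type$, an integrable envelope derivative, and that the family of functions $\type\mapsto \expect{\val(\hat\posterior)}\type - \costagent(\experiment,\posterior)$ indexed by $\experiment$ is equi-Lipschitz in $\type$ on compacts (which follows since $\val$ is bounded on $\Delta(\states)$, using $\V$ monotone in Blackwell order and, for the separable case, $\val$ convex hence bounded). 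I would state these as mild regularity conditions (or note they hold in all the examples of \cref{sec:model}), so that differentiation under the expectation over $\posterior\sim\alloc(\type')$ and interchange of the $\type$-derivative with the $\experiment$-maximization are both valid, and a.e.\ differentiability plus absolute continuity gives the fundamental-theorem-of-calculus representation used throughout.
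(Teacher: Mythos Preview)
Your proposal is correct and follows essentially the same route as the paper: use the envelope theorem to write $\Util(\type)=\int_{\underline\type}^\type Q(z)\,\dd z+\Util(\underline\type)$, define prices accordingly, and then compute $\Util(\type)-\Util(\type;\type')$ by writing $\expect[\posterior\sim\alloc(\type')\given\prior]{\V(\posterior,\experiments,\type)}-\expect[\posterior\sim\alloc(\type')\given\prior]{\V(\posterior,\experiments,\type')}=\int_{\type'}^\type\expect[\posterior\sim\alloc(\type')\given\prior]{\V_3(\posterior,\experiments,z)}\,\dd z$ to obtain exactly the (IC) integral; (IR) is just the integral form of $\Util(\type)\ge\V(\prior,\experiments,\type)$. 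One small remark: your opening sentence about adding the two IC inequalities and cancelling prices would yield a symmetric monotonicity inequality rather than the one-sided (IC) condition of the lemma, so drop that line and go straight to the ``More carefully'' argument, which is the right one (and is what the paper does).
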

The proof of \cref{lem:incentive} is deferred to \cref{apx:opt}.
The incentive constraint on allocation is similar to the integral monotonicity provided in \citet*{yang2020selling},
where the author considers selling data to an agent without any ability to further acquire information.
The IC condition in \cref{lem:incentive} is stronger than the monotonicity condition in allocation where high types are required to receive experiments that are Blackwell more informative than low types. 
In our model, monotonicity in allocation is not sufficient for designing incentive compatible mechanisms.\footnote{\citet*{sinander2019converse} shows that under some regularity conditions, 
monotone non-decreasing (in Blackwell order) allocations can always be implemented by
an incentive compatible mechanism. 
However, those conditions are violated in our paper, and there exists monotone allocations that cannot be implemented in any incentive compatible mechanism. 
One such counterexample is provided in \citet*{yang2020selling} under non-linear values.
} 
This illustrates a distinction between our model and the classical single-item auction, 
where in the latter case the monotonicity of the interim allocation is sufficient to ensure the incentive compatibility of the mechanisms.

\subsection{Without Endogenous Information}\label{sec:price}
As a preliminary, we first characterize the optimal mechanism in the simple case 
where the agent cannot acquire any additional information, i.e., $\experiments = \{\noexp\}$. 
In this case, we have $\costagent(\experiment,\prior) = 0$ for any $\experiment\in \experiments$ and any prior~$\prior$, 
and $\expect[\hat{\posterior}\sim \experiment_{\type,\posterior}\given\posterior]{
\val(\hat{\posterior})} 
= \val(\posterior)$. 
Thus, the revenue of any incentive compatible mechanism $\mech$ in \Cref{eq:rev} simplifies to 
\begin{align*}
\Rev(\mech) =
\expect[\type\sim\dist]{
\virtual(\type) \cdot 
\expect[\posterior\sim\alloc(\type)\given\prior]{
\val(\posterior)}}
- \Util(\underline{\type}).
\end{align*}
Intuitively, in this setting, 
by viewing $\expect[\posterior\sim\alloc(\type)\given\prior]{
\val(\posterior)}$
as a single-dimensional allocation,
the incentive constraint simplifies to the monotonicity constraint on quantity $\expect[\posterior\sim\alloc(\type)\given\prior]{
\val(\posterior)}$,
and hence the allocation in the optimal mechanism is a step function \citep*{myerson1981optimal}, 
which in our setting corresponds to revealing full information if $\phi(\type)\geq 0$
and revealing no information if $\phi(\type) < 0$. 
Note that this is equivalent to posting
a deterministic price for revealing full information. 
This intuition is formalized in \cref{prop:no acquisition}, with the
proof being deferred to \cref{apx:opt}.

\begin{restatable}{proposition}{propnoacquisition}\label{prop:no acquisition}
For linear valuations, under \cref{asp:concatenation} and \ref{asp:regular},
if the set of possible experiments is a singleton, 
i.e., the agent cannot acquire any additional information,
the optimal mechanism is to post a deterministic price for revealing full information. 
\end{restatable}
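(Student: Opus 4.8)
The plan is to reduce the problem to a standard single-dimensional screening problem and then invoke Myerson's argument. Starting from the revenue expression derived just before the statement, namely $\Rev(\mech) = \expect[\type\sim\dist]{\virtual(\type)\cdot\expect[\posterior\sim\alloc(\type)\given\prior]{\val(\posterior)}} - \Util(\underline{\type})$, I would introduce the scalar \emph{quantity} $x(\type) \triangleq \expect[\posterior\sim\alloc(\type)\given\prior]{\val(\posterior)}$, which collapses the infinite-dimensional allocation into one number per type. The first key step is to identify the feasible range of this quantity: since $\alloc(\type)$ is a distribution over experiments and every experiment induces a Bayes-plausible splitting of the prior $\prior$, the value $x(\type)$ ranges over $[\val(\prior),\, \expect[\hat\posterior\sim\fullexp\given\prior]{\val(\hat\posterior)}]$, with the lower endpoint attained by the null experiment (no splitting, since $\val$ is convex so the null experiment is the Blackwell-least informative) and the upper endpoint by $\fullexp$. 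Here convexity of $\val$ is essential: it guarantees $\val(\prior) \le \expect[\posterior\sim\alloc(\type)\given\prior]{\val(\posterior)}$, i.e. more Blackwell-informative means weakly higher value, so the quantity is genuinely monotone in informativeness.

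The second step is to translate the incentive and individual-rationality conditions of \cref{lem:incentive} (specialized to $\experiments=\{\noexp\}$, where $\experiment_{z,\posterior}=\noexp$ and $\expect[\hat\posterior\sim\experiment_{z,\posterior}\given\posterior]{\val(\hat\posterior)}=\val(\posterior)$) into conditions on $x(\cdot)$ alone. The (IC) integral condition becomes $\int_{\type'}^{\type}\bigl(x(z)-x(\type')\bigr)\dd z \ge 0$ for all $\type,\type'$, which is exactly equivalent to $x$ being non-decreasing; the (IR) condition, using $\V(\prior,\{\noexp\},\type)=\val(\prior)\cdot\type$, becomes $\int_{\underline\type}^{\type} x(z)\dd z + \Util(\underline\type) \ge \val(\prior)\cdot\type$, which (since $x(z)\ge\val(\prior)$ always) is implied by $\Util(\underline\type)\ge \val(\prior)\cdot\underline\type$. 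So the data broker's problem is: choose a non-decreasing $x:\types\to[\val(\prior),\val(\fullexp\text{-value})]$ and $\Util(\underline\type)\ge\val(\prior)\underline\type$ to maximize $\expect[\type\sim\dist]{\virtual(\type)x(\type)}-\Util(\underline\type)$.

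The third step is the optimization itself. Clearly $\Util(\underline\type)=\val(\prior)\underline\type$ at the optimum. For the quantity, under \cref{asp:regular} the virtual value $\virtual(\type)$ is non-decreasing, so pointwise maximization of $\virtual(\type)x(\type)$ over the admissible interval — pushing $x(\type)$ to its maximum when $\virtual(\type)\ge 0$ and to its minimum when $\virtual(\type)<0$ — already yields a non-decreasing step function, hence is feasible and optimal: $x(\type)$ equals the full-information value when $\type\ge\monopt$ and $\val(\prior)$ when $\type<\monopt$. The final step is to exhibit a mechanism implementing this: post the single price $\price^* = (\monopt)\cdot\bigl(\expect[\hat\posterior\sim\fullexp\given\prior]{\val(\hat\posterior)}-\val(\prior)\bigr)$ for revealing full information (and offer nothing, i.e. the null experiment at price zero, as the other option). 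One checks that a type buys iff $\type\ge\monopt$, recovering exactly the optimal $x(\cdot)$, and that this mechanism is IC and IR. I expect the main obstacle to be the clean justification of the feasible range of $x(\type)$ and the claim that any non-decreasing $x$ in that range is induced by \emph{some} implementable allocation $\alloc$ — i.e. the converse direction, constructing experiments (e.g. garblings of $\fullexp$) that realize an arbitrary target quantity while respecting the integral-monotonicity of \cref{lem:incentive}; this is where one must be careful that the one-dimensional reduction is lossless rather than merely an upper bound.
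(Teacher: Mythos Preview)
Your proposal is correct and follows essentially the same approach as the paper's proof: reduce to the scalar quantity $x(\type)=\expect[\posterior\sim\alloc(\type)\given\prior]{\val(\posterior)}$, show that the (IC) integral condition of \cref{lem:incentive} collapses to monotonicity of $x$ while (IR) collapses to the lowest-type constraint, and then apply Myerson's pointwise virtual-surplus maximization to obtain the step allocation implemented by a posted price. Your closing worry about the full converse (realizing \emph{every} non-decreasing $x$ via some $\alloc$) is unnecessary for the argument—only the specific optimal step function needs to be implemented, and the posted-price mechanism you write down does exactly that.
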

\cref{prop:no acquisition} shows that the optimal mechanism has a simple form when the agent cannot acquire endogenous information. 
This implies that, for example, 
when the data broker sells information to a firm that supplies
products to consumers with a quadratic cost for quality (c.f., \cref{exp:firm auction}), 
the optimal mechanism is posted pricing. 
This is in contrast to \citet*{yang2020selling}, 
where the firm has a linear cost function.
The optimal mechanism there is a  $\varphi$-quasi-perfect mechanism, 
which is considerably more complicated than posted pricing.

\subsection{Endogenous Information}

When agents can acquire endogenous information, the incentive constraints in \cref{lem:incentive} cannot be simplified to the monotonicity constraint, 
and the individual rationality constraint may bind for types higher than the lowest type $\underline{\type}$. 
Thus the pointwise optimization method for classical auction design cannot be applied when there is endogenous information. 
In the following theorem, we provide a full characterization of the optimal mechanism under \cref{asp:concatenation} and \ref{asp:regular}
by directly addressing the constraints on the integration of allocations. 
The detailed proof of \cref{thm:optimal linear} is provided in \cref{apx:opt}.

\begin{restatable}{theorem}{thmoptlinear}\label{thm:optimal linear}
For linear valuations, under \cref{asp:concatenation} and \ref{asp:regular},
there exists an optimal mechanism $\widehat{\mech}$ with allocation rule $\hat{\alloc}$ such that,\footnote{The characterization of allocation actually holds in any optimal mechanism except for a set of types with measure zero.} 
\begin{itemize}
    \item for any type $\type \geq \monopt$, the data broker reveals full information, i.e., $\hat{\alloc}(\type) = \fullexp$;
    
    \item for any type $\type < \monopt$, the data broker commits to experiment
        \begin{align*}
        \hat{\alloc}(\type) =
        \arg\max_{\experiment\in\experiments} \,
        \expect[\posterior \sim \experiment\given\prior]{\V(\posterior, \type)}
        - \costagent(\experiment,\prior)
        \end{align*}
        where ties are broken by maximizing the cost $\costagent(\experiment,\prior)$;
        
    \item $\Util(\underline{\type})
    = \V(\prior, \experiments, \underline{\type})$.
\end{itemize}

\end{restatable}

First, \cref{thm:optimal linear} implies that 
there is no distortion at the top
in the optimal mechanism. 
Intuitively, when the agent has sufficiently high type, 
i.e., $\virtual(\type) > 0$, 
by fully revealing the information to the agent, 
the expected virtual value is maximized 
since 
$\expect[\posterior\sim\signal\given\prior]{\expect[\hat{\posterior}\sim \experiment_{\type,\posterior}\given\posterior]{\val(\hat{\posterior})}}$
is maximized when the signal $\signal$ fully reveals the state. 
Moreover, with a fully revealed state, 
the posterior belief of the agent is a singleton, 
and hence the cost of the endogenous information is zero 
since there is no additional information available.
By \Cref{eq:rev}, this allocation maximizes the virtual surplus and hence the expected revenue of the data broker.

Moreover, according to the characterization in \cref{thm:optimal linear}, 
for any type $\type < \monopt$, the utility of the agent in the optimal mechanism $\widehat{\mech}$ is 
\begin{align*}
\Util(\type) &= \int_{\underline{\type}}^{\type} 
\expect[\posterior\sim\hat{\alloc}(z)\given\prior]{
\expect[\hat{\posterior}\sim \experiment_{z,\posterior}\given \posterior]{\val(\hat{\posterior})}} \dd z
+ \Util(\underline{\type})\\
&= \int_{\underline{\type}}^{\type} 
\expect[\posterior\sim\experiment_{\type,\prior}\given\prior]{
\val(\posterior)} \dd z
+ \V(\prior, \experiments, \underline{\type}) 
= \V(\prior, \experiments, \type).
\end{align*}
Thus, the individual rationality constraint binds not only for the lowest type 
but also for all types below the monopoly type $\monopt$.
Note that this is different from the classic auction design problem or selling information when the agent cannot acquire additional information. 
In those cases, the utilities of the low-type agents coincides with their outside options because the seller chooses not to sell to these agents. 
Here, the data broker provides valuable information to the agent for a positive payment such that low-type agents are exactly indifferent between participation
and opting out.

\section{Pricing for Full Information}
\label{sec:pricing}
The characterization in \cref{thm:optimal linear} demonstrates that
the optimal mechanism may contain a continuum of menu options,
which discriminate different types of the agent by offering
experiments with an increasing level of informativeness. 
Alternatively, a simpler mechanism that the principal can adopt in practice is pricing for full information, 
which does not require the principal to have the commitment power to randomize and conceal partial information. 
This simple mechanism is considerably easier and potentially less costly for the principal to implement in practice. 
In this section, we show that the additional benefit of price discrimination is limited, 
as posting a deterministic price for revealing full information is approximately optimal for revenue maximization
given the same set of assumptions as in \cref{thm:optimal linear}.
The proof of \cref{thm:approx} is provided in \cref{apx:pricing}.

\begin{restatable}{theorem}{thmapprox}\label{thm:approx}
For linear valuations, under \cref{asp:concatenation} and \ref{asp:regular},
for any prior $\prior$ and any cost function~$\costagent$,
posting a deterministic price for revealing full information
achieves at least half of the optimal revenue. 
\end{restatable}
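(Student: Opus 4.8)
The plan is to use the exact description of the optimal mechanism from \cref{thm:optimal separable} together with the revenue identity \eqref{eq:rev}, and to compare it against the one–parameter family of ``post a deterministic price $p$ for full information'' mechanisms, whose best revenue I denote $\Rev^{\pp}$.

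\emph{Step 1 (the optimal revenue).} By \cref{thm:optimal} and \cref{asp:concatenation} the agent acquires no costly information on the equilibrium path, so every term $\costagent(\experiment_{\type,\posterior},\posterior)$ in \eqref{eq:rev} vanishes and $\expect[\hat\posterior\sim\experiment_{\type,\posterior}\given\posterior]{\val(\hat\posterior)}=\val(\posterior)$. Substituting the allocation of \cref{thm:optimal separable} (full information when $\type\ge\monopt$; the outside-option-optimal experiment $\hat\alloc(\type)$ when $\type<\monopt$; $\Util(\underline\type)=\V(\prior,\experiments,\underline\type)$) I would write $\opt{\dist,\experiments}=\Rev_1+\Rev_2$, where $\Rev_1=\price^*(1-\dist(\monopt))$ is collected from the top types, all buying full information at the common price $\price^*=\monopt\val\full-\V(\prior,\experiments,\monopt)$ with $\val\full:=\expect[\posterior\sim\fullexp\given\prior]{\val(\posterior)}$, and $\Rev_2=\int_{\underline\type}^{\monopt}\costagent(\hat\alloc(\type),\prior)\,d\dist(\type)$, since a low type's equilibrium payment equals exactly the cost of the experiment the broker supplies her. (The a priori bound $\opt{\dist,\experiments}\le\opt{\dist,\{\noexp\}}$ from \cref{prop:rev ranking}, with the right-hand side a posted-price revenue by \cref{prop:no acquisition}, is a useful sanity check but not sharp enough by itself.)

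\emph{Step 2 (the simple mechanism).} Let $g(\type):=\type\val\full-\V(\prior,\experiments,\type)$ be the net value to type $\type$ of buying full information over her outside option. Convexity of $\val$ (so full revelation is Blackwell-largest and maximizes $\expect{\val(\cdot)}$ over all experiments) gives $g'(\type)=\val\full-\V_3(\prior,\experiments,\type)\ge 0$, so $g$ is nondecreasing; hence at price $p$ the set of buyers is $\{\type:g(\type)\ge p\}=[\type_p,\overline\type]$, and $\Rev^{\pp}\ge g(\tau)(1-\dist(\tau))$ for every threshold $\tau$. In particular $g(\monopt)=\price^*$, so $\Rev^{\pp}\ge\Rev_1$; and since $\costagent(\hat\alloc(\type),\prior)=\type\,\V_3(\prior,\experiments,\type)-\V(\prior,\experiments,\type)$ is nonnegative, nondecreasing on $[\underline\type,\monopt]$ (because $\V$ is convex in $\type$), and never exceeds $g(\type)$, posting $p=\costagent(\hat\alloc(\tau),\prior)$ also shows $\Rev^{\pp}\ge\costagent(\hat\alloc(\tau),\prior)\,(1-\dist(\tau))$ for every $\tau\le\monopt$.

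\emph{Step 3 (the key identity, and the finish).} For each threshold $\type^*$ I would evaluate \eqref{eq:rev} on the posted-price mechanism with price $g(\type^*)$ and subtract it from $\opt{\dist,\experiments}$, obtaining $\opt{\dist,\experiments}=g(\type^*)(1-\dist(\type^*))+\mathrm{err}(\type^*)$, where $\mathrm{err}(\type^*)\ge 0$ is an explicit integral accounting for (i) the extra virtual surplus OPT gains by giving full information rather than the outside option to the types lying between $\type^*$ and $\monopt$, weighted by $|\virtual|$, and (ii) the experiment cost OPT converts into payments from the types below $\type^*$ (so $\mathrm{err}(\monopt)=\Rev_2$). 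Since $\mathrm{err}(\type^*)=\opt{\dist,\experiments}-g(\type^*)(1-\dist(\type^*))$, the theorem reduces to exhibiting one $\type^*$ with $\mathrm{err}(\type^*)\le\tfrac12\opt{\dist,\experiments}$. When $\Rev_2\le\Rev_1$ this holds at $\type^*=\monopt$, because $\opt{\dist,\experiments}=\Rev_1+\Rev_2\le 2\Rev_1\le 2\Rev^{\pp}$. When $\Rev_2>\Rev_1$ one uses that $\mathrm{err}$ is decreasing for $\type^*$ near $\underline\type$ (where $\costagent(\hat\alloc(\type^*),\prior)$ is small) and increasing as $\type^*\uparrow\monopt$ (since $\virtual(\monopt)=0$ kills the $|\virtual|$-weighted part while $\costagent(\hat\alloc(\monopt),\prior)>0$, unless we are in the \cref{prop:informative prior} regime where posted pricing is exactly optimal), so $\mathrm{err}$ has an interior minimizer $\hat\type^*$ with first-order condition $\costagent(\hat\alloc(\hat\type^*),\prior)=|\virtual(\hat\type^*)|\,(\val\full-\V_3(\prior,\experiments,\hat\type^*))$; one then bounds $\mathrm{err}(\hat\type^*)\le\tfrac12\opt{\dist,\experiments}$ by combining this condition with the monotonicity of $\costagent(\hat\alloc(\cdot),\prior)$ and of $\val\full-\V_3(\prior,\experiments,\cdot)$, the identity $\int_a^b|\virtual(\type)|\,d\dist(\type)=b(1-\dist(b))-a(1-\dist(a))$ valid on $[\underline\type,\monopt]$ (where $\type(1-\dist(\type))$ is increasing by \cref{asp:regular}), and $\costagent(\hat\alloc(\type),\prior)\le g(\type)$. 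I expect this last inequality — equivalently, a direct proof that the revenue-maximizing posted price already captures at least half of $\opt{\dist,\experiments}$ — to be the main obstacle, the delicate regime being when $\dist(\monopt)$ is close to $1$, so that the monopoly quantity $1-\dist(\monopt)$ is tiny and the naive price $\price^*$ is far from optimal.
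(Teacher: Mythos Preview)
Your Steps~1 and~2 are correct: the decomposition $\opt{\dist,\experiments}=\Rev_1+\Rev_2$ with $\Rev_1=g(\monopt)(1-\dist(\monopt))$ and $\Rev_2=\int_{\underline\type}^{\monopt}\costagent(\hat\alloc(\type),\prior)\,d\dist(\type)$ follows cleanly from \cref{thm:optimal separable}; the identity $\costagent(\hat\alloc(\type),\prior)=\type\,\V_3(\prior,\experiments,\type)-\V(\prior,\experiments,\type)$, the monotonicity of $g$ and of $\costagent(\hat\alloc(\cdot),\prior)$, and the first-order condition you write at the optimal posted-price threshold are all right.

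The gap is exactly where you flag it: Step~3 when $\Rev_2>\Rev_1$. You list ingredients but never combine them into the bound $\mathrm{err}(\hat\type^*)\le\tfrac12\opt{\dist,\experiments}$, and it is not clear this route closes. The FOC pins down the optimal threshold only implicitly, and the integral $\Rev_2=\int h\,d\dist$ can be comparable to $\Rev^{\pp}$ even when each pointwise product $h(\tau)(1-\dist(\tau))$ is much smaller. Indeed, in the tight instance the paper identifies, the limit has $\Rev_1=\Rev_2=\Rev^{\pp}$ exactly, so your case split lands right on its boundary and any slack you might hope to exploit in the ``hard'' case vanishes there.

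The paper takes a completely different route: rather than fixing the instance and hunting for a good price, it fixes the posted-pricing revenue and looks for the worst instance. Writing $\ir(\type)=\V(\prior,\experiments,\type)$ for the (convex) outside option and $\bx=\expect[\state\sim\prior]{\val(\state)}$, the argument proceeds by two reductions. First, $\dist$ is replaced by a distribution $\hat\dist$ whose revenue curve is a piecewise-linear upper envelope (flat at the monopoly revenue on an initial interval of quantiles, then a single linear segment); $\hat\dist$ first-order stochastically dominates $\dist$, so $\opt{\cdot,\ir}$ can only increase, while $\pp(\cdot,\ir)$ is unchanged because $\hat\dist$ is constructed to match $\dist$ at the optimal posted-price threshold. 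Second, $\ir$ is replaced by $\hat\ir(\type)=\type\bx-\zeta/\quant_{\hat\dist}(\type)$, the smallest convex outside option making the seller indifferent across all posted prices; since $\ir\ge\hat\ir$ the IR constraint is relaxed, so again $\opt{}$ weakly increases while $\pp$ stays at $\zeta$. For the canonical pair $(\hat\dist,\hat\ir)$ the optimal revenue is computed in closed form as $\zeta\bigl(1+\tfrac{(1-\bar\quant)(1-r)}{1-r\bar\quant}\bigr)\le 2\zeta$, with equality only as $\bar\quant,r\to 0$. The payoff of this worst-case reduction is that in the canonical instance $\costagent(\hat\alloc(\type),\prior)$ becomes \emph{constant} in $\type$, collapsing the hard integral $\Rev_2$ to a one-line algebra check --- precisely the step your direct approach leaves open.
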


Intuitively, the revenue in the optimal mechanism can be decomposed into two parts, 
the revenue contribution from high types (types above $\monopt$)
and the revenue contribution from low types (types below~$\monopt$).
The revenue contribution from high types can be directly covered by the optimal revenue from selling full information, 
since in the optimal mechanism, all those types receive full information and pay the threshold price, 
i.e., the price such that the cutoff type $\monopt$ is indifferent between acquiring full information from the principal and opting out of the auction.
Thus, it is sufficient to bound the revenue contribution from low types. 

Fixing any optimal price for selling full information
and the probability that the agent accepts the price, 
we can provide a lower bound on the cumulative distribution function (CDF) evaluated at any low type.
This is because if the CDF of any low type is lower than our bound,
the principal could adjust the price and correspondingly improve the revenue from selling full information,
a contradiction to the optimality of the given price. 
Since the optimal revenue of the seller is monotone in first-order stochastic dominance of the type distribution, 
given this lower bound on the CDF, 
we can derive an upper bound on the total revenue extracted from low types, 
which can further be bounded by the optimal revenue from pricing for full information. 
The main intuition behind this bound is that the worst case distribution over types cannot be too differentiated,
and hence the comparative advantage of screening compared to pricing is limited.\footnote{
In comparison, \citet{bergemann2021is} demonstrate that if the agent receives an exogenous signal before the auction, 
for any small constant $c > 1$ and any large number $m > 1$, 
there exists a valuation function of the agent and a signal structure such that 
any mechanism that is a $c$-approximation ratio to the optimal revenue
has menu size of at least $m$.}

Note that the loss of half of the optimal revenue is derived in the worst-case analysis. 
In many practical applications, the type distributions and cost functions may not coincide with the worst case, 
and the gap between the optimal revenue and pricing for full information can be significantly smaller than 2 (see \cref{sec:application}).
This provides further justification for adopting the simple pricing mechanisms instead of the optimal mechanism in practice.

In the remainder of this section,
we will also provide sufficient conditions such that
pricing for revealing full information is exactly optimal. 

\begin{restatable}{proposition}{propinformativeprior}\label{prop:informative prior}
\sloppy
For linear valuations, under \cref{asp:concatenation} and \ref{asp:regular},
for any cost $\costagent$ and any prior $\prior$,
the optimal mechanism is to post a price for revealing full information
if $\noexp \in \arg\max_{\experiment\in\experiments} \,
\expect[\hat{\posterior} \sim \experiment\given\prior]{\V(\hat{\posterior}, \monopt)}
- \costagent(\experiment,\prior)$. 
\end{restatable}
In \cref{apx:pricing}, we will show that when the monopoly type $\monopt$ finds it optimal to not acquire costly information, 
all types below $\monopt$ will have strict incentives to not acquire costly information. 
Combining this observation with the characterization in \cref{thm:optimal linear}, 
we directly obtain the result that the optimal mechanism is pricing for revealing full information.

Note that the condition in \cref{prop:informative prior} is also necessary for pricing for revealing full information to be revenue-optimal 
when $\monopt > \underline{\type}$ and the type distribution is continuous with positive density everywhere. 
This is because if the monopoly type $\monopt$ has a strict incentive to acquire costly information given the prior, 
there exists a positive measure of types below $\monopt$
that also have a strict incentive to acquire costly information given the prior. 
Hence, the data broker can obtain a strict revenue increase by price
discriminating those low types.

The interpretation of \cref{prop:informative prior}
is that discrimination and offering partial information are not beneficial when the prior is sufficiently informative. 
Fixing the cost function $\costagent$, we say the prior $\prior$ is \emph{sufficiently informative} if 
$\noexp \in \arg\max_{\experiment\in\experiments} \,
\expect[\hat{\posterior} \sim \experiment\given\prior]{\V(\hat{\posterior}, \monopt)}
- \costagent(\experiment,\prior)$.
This is intuitive since when the prior is sufficiently close to the degenerate pointmass distribution, 
the marginal cost of additional information is sufficiently high while the marginal benefit of additional information is bounded. 
Thus, it is not beneficial for the agent to not acquire any additional information.
We formalize the intuition in \cref{apx:pricing}.

\section{Applications}
\label{sec:application}
In this section, we apply the characterizations of the optimal mechanisms to several leading examples of selling information.

\vspace{10pt}\noindent
\textbf{Error Minimization.}
Here we consider a model where the agent is a decision maker attempting to minimize the square error of the chosen action. 
That is, let the state space and action space be $\states = \actions \subseteq \reals$, 
and the agent minimizes the square error between the chosen action and the true state, 
i.e., the utility of the agent 
is $u(\action,\state;\type) = - \type \cdot (\action - \state)^2$.
This is one of the models 
illustrated in \cref{exp:match action}. 

Recall that in \cref{exp:match action}, 
we show that the valuation function of the agent is linear 
with the form 
$\V(\posterior, \type) = \type \cdot \val(\posterior)$, 
where $\val(\posterior)=-\var(\posterior)$ is the variance of distribution $\posterior$. 
Let~$\dist$ be the distribution over the types,
and let $\monopt$ be the monopoly type in distribution~$\dist$.
We assume that the prior distribution $\prior$ over states is a Gaussian distribution $\g(0, \eta^2)$ with variance $\eta^2$. 
The agent can repeatedly pay a unit cost $c$
to observe a Gaussian signal 
$s = \state + \epsilon$
where $\epsilon\sim \g(0,1)$.

Next, we illustrate the optimal mechanism in this setting by applying \cref{thm:optimal linear}. 
\begin{itemize}
\item For any $\type\geq \monopt$, 
the data broker reveals the states to the firm 
at price $p = \monopt\cdot\var(\prior)$.

\item For any $\type < \monopt$, 
the optimal allocation solves the following information acquisition problem: 
\begin{align*}
\hat{\alloc}(\type) &=
\arg\max_{\experiment\in\experiments} \,
\expect[\posterior \sim \experiment\given\prior]{\V(\posterior, \type)}
- \costagent(\experiment,\prior).
\end{align*}
Note that in this example, the agent can only decide the number of Gaussian signals to observe, 
and with $k$ signals, 
the cost is $k$, 
and the variance of the posterior is $\frac{\eta^2}{1+k\eta^2}$ regardless of the realized sequence of the observed signals.
Thus, letting 
\begin{align*}
k_{\type} = \argmax_k - \type \cdot \frac{\eta^2}{1+k\eta^2} - kc,
\end{align*}
in the optimal mechanism, the data broker commits to an experiment that is Blackwell equivalent to revealing $k_{\type}$
Gaussian signals with unit variance and charges the agent price $k_{\type}\cdot c$.
\end{itemize}

Note that for any $\type < \monopt$, 
the optimal number of signals revealed to the agent $k_{\type}$ 
is weakly increasing in~$\type$ and $\eta$ 
and is weakly decreasing in $c$. 
Moreover, fixing distribution $\dist$ and correspondingly the monopoly type $\monopt$, 
when $\eta$ is sufficiently small or when $c$ is sufficiently large, 
$k_{\type} = 0$ for any $\type < \monopt$, 
and the optimal mechanism reduces to the posted pricing mechanism.

Finally, to compare the revenue between the optimal mechanisms and pricing for full information in this setting, 
assume that $\dist$ is the uniform distribution in $[0,1]$. 
Consider the choice of parameters $c\in\{0.05,0.1,0.2\}$
and $\eta^2\in\{0.2,0.5,0.8\}$,  and
the revenues for the two mechanisms are computed in \cref{tab:err min}. 
As shown in the table, the maximum ratio between the two mechanisms given the choice of the parameter sets is at most $1.13$.
\begin{table}[t]
    \centering
    \begin{tabular}{|c|c|c|c|}
       \hline
         & 0.2 & 0.5 & 0.8 \\
       \hline
       0.05  & 0.05 / 0.05 & 0.118 / 0.110 & 0.152 / 0.135  \\
       \hline
       0.1  & 0.05 / 0.05 & 0.125 / 0.125 & 0.183 / 0.167  \\
       \hline
       0.2  & 0.05 / 0.05 & 0.125 / 0.125 & 0.2 / 0.2  \\
       \hline
    \end{tabular}
    \caption{In this table, each column represents a choice of parameter $\eta^2$ and each row represents a choice of parameter $c$.
    The first number in each entry is the revenue of the optimal mechanism, and the second number is the revenue from pricing for full information.}
    \label{tab:err min}
\end{table}

\vspace{10pt}\noindent
\textbf{Monopoly auction.}
Here, we consider the monopoly auction model introduced in \citet*{mussa1978monopoly}. 
This is introduced in \cref{exp:firm auction}. 
We consider a simple case in which the state space $\states = \{\state_1,\state_2\}$ is binary, 
where $\state_i\in\reals$ represents the value of the consumer
and $0< \state_1 < \state_2$.
In this case, given posterior belief $\posterior$ of the firm, 
the virtual value of the consumer simplifies to 
\begin{align*}
\virtual_{\posterior}(\state_1) &= \state_1 - \frac{\posterior(\state_2)(\state_2-\state_1)}{\posterior(\state_1)};\\
\virtual_{\posterior}(\state_2) &= \state_2.
\end{align*}
According to \citet*{mussa1978monopoly},
the optimal mechanism of a firm with cost $c$ is to provide the product with quality
$q(\state_i) = \frac{\max\{0,\virtual_{\posterior}(\state_i)\}}{2c}$
to the agent with value $\state_i$,
and the expected profit of the firm is $\frac{1}{c}\cdot\val(\posterior)$, 
where 
\begin{align*}
\val(\posterior) \triangleq \posterior(\state_1) \cdot
\frac{\max\{0,\virtual_{\posterior}(\state_1)\}^2}{4}
+ \posterior(\state_2) \cdot \frac{\state^2_2}{4}.
\end{align*}
Suppose that cost $c$ is the private information of the firm,
and let $\type = 1/c$.
Recall that $\dist$ is the distribution over type $\type$
and $\prior$ is the prior over states.
Let $\monopt$ be the monopoly type in distribution $\dist$. 
We assume that the firm can flexibly design any experiment, i.e., $\experiments$ contains all possible experiments. 
In addition, for any $\experiment\in\experiments$,
the cost is the reduction in entropy, 
i.e., 
$\costagent(\experiment,\posterior) = \entropy(\posterior)
- \expect[\hat{\posterior}\sim\experiment\given\posterior]{\entropy(\hat{\posterior})}$ 
for any posterior $\posterior$,
where $\entropy(\posterior) = -\sum_{i} \posterior(\state_i) \log \posterior(\state_i)$
is the entropy function. 

Since the valuation function of the firm is linear, 
next we illustrate the optimal mechanism in this setting by applying \cref{thm:optimal linear}. 
\begin{itemize}
\item For any $\type\geq \monopt$, 
or equivalently for any $c \leq 1/\monopt$,
the data broker reveals full information to the firm. 
\begin{figure}[t]
\centering
\begin{tikzpicture}[scale = 0.48]

\draw (-0.2,0) -- (12.5, 0);
\draw (0, -0.2) -- (0, 7.2);


\begin{scope}

\draw  plot [smooth, tension=0.7] coordinates { (0,2.5) (1.6, 6.5) (6, 4)};

\draw  plot [smooth, tension=0.7] coordinates { (6, 4) (10, 5.5) (12, 0.5)};
\end{scope}

\draw (0, -0.8) node {$0$};

\draw (12, 0) -- (12, 0.2);
\draw (12, -0.8) node {$1$};

\draw (13.6, 0) node {$\hat{\posterior}(\state_1)$};


\draw (1.4, 8) node {$\val(\hat{\posterior}) \cdot \type + \entropy(\hat{\posterior})$};

\draw [dotted] (12, 0) -- (12, 0.5);

\draw (1.9, -0.8) node {$\posterior_{\type}(\state_1)$};
\draw (1.9, 0) -- (1.9, 0.2);
\draw [dotted] (1.9, 0) -- (1.9, 6.5);
\draw (9.7, -0.8) node {$\posterior'_{\type}(\state_1)$};
\draw (9.7, 0) -- (9.7, 0.2);
\draw [dotted] (9.7, 0) -- (9.7, 5.7);


\draw [dashed] (0.8, 6.67) -- (12, 5.38);

\end{tikzpicture}
\caption{\label{fig:auction} 
The figure is the value of $\val(\hat{\posterior}) \cdot \type + \entropy(\hat{\posterior})$ 
as a function of $\hat{\posterior}(\state_1)$.
}
\end{figure}
\item For any $\type < \monopt$, 
or equivalently for any $c > 1/\monopt$, 
the optimal allocation solves the following Bayesian persuasion problem: 
\begin{align*}
\hat{\alloc}(\type) &=
\arg\max_{\experiment\in\experiments} \,
\expect[\posterior \sim \experiment\given\prior]{\V(\posterior, \type)}
- \costagent(\experiment,\prior) \\
&= \arg\max_{\experiment\in\experiments} \,
\expect[\posterior \sim \experiment\given\prior]{\type \cdot\val(\posterior)
+ \entropy(\posterior)}
- \entropy(\prior).
\end{align*}
By the concavification approach in \citet*{kamenica2011bayesian}, the optimal experiment has signal space of size 2. 
As illustrated in \cref{fig:auction}, if the prior satisfies
$\posterior_{\type}(\state_1) < \prior(\state_1) < \posterior'_{\type}(\state_1)$, 
the data broker induces either posterior $\posterior_{\type}$ or $\posterior_{\type}'$ for type~$\type$. 
Otherwise, the data broker reveals no information to the firm. 
\end{itemize}

To compare the revenue between the optimal mechanism and pricing for full information in this setting, 
we assume that type $\type$ is drawn from a uniform distribution in $[0,10]$.
Moreover, we consider binary state space $\states = \{1,2\}$
and measure the gap based on the choice of prior $\prior$. 
The revenues of the two mechanisms are computed in \cref{tab:monopoly}. 
As shown in the table, the maximum ratio between the two mechanisms given the choice of parameter sets is at most $1.10$.

\begin{table}[t]
    \centering
    \begin{tabular}{|c|c|c|c|}
       \hline
        0.2 & 0.4 & 0.6 & 0.8 \\
       \hline
       0.125 / 0.125  & 0.244 / 0.235 & 0.294 / 0.269 & 0.211 / 0.200  \\
       \hline
    \end{tabular}
    \caption{In this table, the first row is the probability of state $1$ given prior $\prior$, 
    and the second row is the revenue from the optimal mechanisms and pricing for full information.}
    \label{tab:monopoly}
\end{table}

\bibliographystyle{plainnat} 
\bibliography{ref} 
\newpage

\begin{appendix}
\section{General Valuations}\label{apx:general}

In this section, we will provide characterizations of the revenue-optimal mechanism without the linearity assumption on the valuation function. 

\begin{restatable}{theorem}{thmoptimal}\label{thm:optimal}
Under \cref{asp:concatenation}, assuming that there exists $\beta > 0$
such that $|\V(\posterior, \type)| \leq \beta$ for any $\posterior$ and any~$\type$ in the support of $\dist$,
there exists a revenue-optimal mechanism such that the following two properties hold. 
\begin{enumerate}
\item The agent does not acquire costly information in equilibrium. 
That is, 
\begin{align*}
\expect[\type\sim\dist]{\expect[\posterior\sim\alloc(\type)\given\prior]{
\costagent(\experiment^*_{\posterior,\type}, \posterior)}} = 0
\end{align*}
where $\experiment^*_{\posterior,\type} \in \arg\max_{\experiment\in\experiments} \,
\expect[\hat{\posterior} \sim \experiment\given\posterior]{\V(\hat{\posterior}, \type)}
- \costagent(\experiment,\posterior)$.\footnote{Since $\costagent(\experiment,\posterior)\geq 0$ for any $\experiment$ and $\posterior$, 
the agent only acquires costly information for a set with measure zero.
}

\item Revealing full information is in the menu of the optimal mechanism, 
i.e., there exists a type $\type\in\types$ such that $\alloc(\type) = \fullexp$.
\end{enumerate}
\end{restatable}
Before we formally state the proofs, we first interpret our results. 
The second statement of \cref{thm:optimal} is analogous to the standard no distortion at the top result. 
As we made no assumption on the type space here, 
unlike a setting with linear valuations where there is a natural order on the type space, 
we cannot pin down the type that receives full information in the optimal mechanism. 
What is more interesting is the first statement, 
where the theorem states that in equilibrium, 
the agent never (except for a set with measure zero) has an incentive to acquire additional costly information after receiving the signal from the data broker. 
This holds because if the agent with type~$\type$ acquires additional
information by conducting experiment $\experiment_{\type}$ at a positive cost, 
the data broker can directly supply this experiment to the agent
and increases the payment of type $\type$ by the cost of experiment $\experiment_{\type}$.
The new mechanism increases the expected revenue of the data broker 
and eliminates the incentives for an agent with type $\type$ to further acquire any additional information.
In the following proof, we will formally show that this new mechanism is also incentive compatible and individually rational.\footnote{Note that this result relies crucially on \cref{asp:concatenation}. 
In \cref{apx:necessity}, we will show that if \cref{asp:concatenation} is violated, 
it is possible that the agent has a strict incentive to acquire additional costly information in equilibrium.} 

\begin{proof}[Proof of Statement 1 of \cref{thm:optimal}]
Let $\mech = (\alloc,\price)$ be the optimal mechanism.
Let $\kappa_{\type,\signal}$ be the optimal choice of experiments for an agent with type $\type$
when she receives the realized experiment $\signal$
from mechanism~$\mech$.
By contradiction, 
let $\hat{\types}$ be the set of types with positive measure such that 
for any $\hat{\type} \in \hat{\types}$,
the cost of additional experiments given the optimal best response strategy $\kappa_{\hat{\type},\signal}$
for an agent with type $\hat{\type}$
is positive, 
i.e., 
\begin{align*}
\int_{\states} \int_{\signalspacea} 
\costagent(\kappa_{\hat{\type},\alloc(\hat{\type})}(\sa), \hat{\posterior}_{\alloc(\hat{\type}),\sa,\prior})
\dd \alloc(\hat{\type})(\sa\given \state) \dd \prior(\state) 
> 0,
\end{align*}
where $\hat{\posterior}_{\signal,\sa,\prior}$ is the posterior given experiment $\signal$ and signal $\sa$,
assuming that the prior is $\prior$.
Let~$\hat{\alloc}$ and $\hat{\price}$ be the allocation and payment rule such that 
\begin{itemize}
\item for any $\type\not\in \hat{\types}$, 
$\hat{\alloc}(\type) = \alloc(\type)$ and 
$\hat{\price}(\type) = \price(\type)$;

\item for any $\hat{\type}\in \hat{\types}$,
for any $\signal\in\signals$,
$\hat{\alloc}(\hat{\type}) = \kappa_{\hat{\type},\signal} \circ \alloc(\hat{\type})$,\footnote{Note that under this new sequential experiment $\hat{\alloc}(\hat{\type})$,
the signals generated by experiments in all stages will be revealed to the agent.}
and 
\begin{align*}
\hat{\price}(\hat{\type}) = 
\price(\hat{\type}) 
+ \int_{\states} \int_{\signalspacea} 
\costagent(\kappa_{\hat{\type},\alloc(\hat{\type})}(\sa), \hat{\posterior}_{\alloc(\hat{\type}),\sa,\prior})
\dd \alloc(\hat{\type})(\sa\given \state) \dd \prior(\state). 
\end{align*}
\end{itemize}
Let $\hat{\mech} = (\hat{\alloc}, \hat{\price})$. 
It is easy to verify that 
\begin{align*}
\Rev(\hat{\mech}) 
&= \int_{\types} \hat{\price}(\type) \dd\dist(\type)
= \int_{\types \setminus \hat{\types}} \hat{\price}(\type) \dd\dist(\type) 
+\int_{\hat{\types}} \hat{\price}(\type) \dd\dist(\type) \\
&> \int_{\types \setminus \hat{\types}} 
\price(\type) \dd\dist(\type)
+ \int_{\hat{\types}} 
\price(\type) \dd\dist(\type)
= \Rev(\mech).
\end{align*}
The inequality holds because the types in set $\hat{\types}$ occur with positive measure
and for any type $\type\in \hat{\types}$, 
$\hat{\price}(\hat{\type}) >
\price(\hat{\type})$. 
Thus, the revenue of mechanism $\hat{\mech}$ is strictly higher.
Moreover, in mechanism $\hat{\mech}$, 
the utility of the agent
has at least the same expected utility as in mechanism $\mech$ 
by not acquiring any additional information upon receiving the signal. 
Therefore, mechanism $\hat{\mech}$ is individually rational.
Next, it is sufficient to show that mechanism $\hat{\mech}$ is incentive compatible. 
It is easy to verify that an agent with any type $\type$ has no incentive to deviate to type $\hat{\type}\not\in\hat{\types}$ 
since her utility from reporting truthfully weakly increases, 
while her utility from misreporting $\hat{\type}$ remains unchanged. 
Finally, for any type $\type$, 
under mechanism $\hat{\mech}$, 
the utility from deviating the report from type $\type$ to type $\hat{\type}\in \hat{\types}$ is
\begin{align*}
&\Util(\type; \hat{\type},\hat{\mech})
= \expect[\posterior\sim\hat{\alloc}(\hat{\type})\given\prior]{
\V(\posterior, \experiments, \type)}
- \hat{\price}(\hat{\type})\\
&= \expect[\posterior\sim\hat{\alloc}(\hat{\type})\given\prior]{
\V(\posterior, \experiments, \type)}
- \int_{\states} \int_{\signalspacea} 
\costagent(\kappa_{\hat{\type},\alloc(\hat{\type})}(\sa), \hat{\posterior}_{\alloc(\hat{\type}),\sa,\prior})
\dd \alloc(\hat{\type})(\sa\given \state) \dd \prior(\state) 
- \price(\hat{\type})\\
&\leq \int_{\states} \int_{\signalspacea} 
\V(\hat{\posterior}_{\alloc(\hat{\type}),\sa,\prior}, \experiments, \type)
\dd \alloc(\hat{\type})(\sa\given \state) \dd \prior(\state)
- \price(\hat{\type})\\
&= \Util(\type; \hat{\type},\mech).
\end{align*}
The inequality holds because by \cref{asp:concatenation},
given any posterior $\hat{\posterior}_{\alloc(\hat{\type}),\sa,\prior}$, 
a feasible choice for the agent is to first choose experiment $\kappa_{\hat{\type},\alloc(\hat{\type})}(\sa)\in \experiments$ 
with cost $\costagent(\kappa_{\hat{\type},\alloc(\hat{\type})}(\sa), \hat{\posterior}_{\alloc(\hat{\type}),\sa,\prior})$ 
and then choose additional experiments optimally given the realized signal. 
The utility of this choice is bounded from above by directly choosing the optimal experiment from $\experiments$, 
which induces value $\V(\hat{\posterior}_{\alloc(\hat{\type}),\sa,\prior}, \experiments, \type)$
for the agent.
Thus, we have 
\begin{align*}
\Util(\type; \hat{\mech}) - \Util(\type; \hat{\type},\hat{\mech})
\geq \Util(\type; \mech) - \Util(\type; \hat{\type},\mech) \geq 0,
\end{align*}
and mechanism $\hat{\mech}$ is incentive compatible.
\end{proof}

\begin{proof}[Proof of Statement 2 of \cref{thm:optimal}]
For any mechanism $\mech = (\alloc,\price)$,
let $\bar{\price} = \sup_{\type}\price(\type)$,
which is well defined since the values of the agent are bounded. 
By adding the choice $(\fullexp, \bar{\price})$ into the menu of mechanism $\mech$, 
the revenue of the data broker only increases.
\end{proof}

Another natural question for settings with a general valuation function is whether the approximate optimality of pricing for full information extends. 
Unfortunately, without a simple closed-form characterization of the optimal mechanism, it is challenging to provide a conclusive answer to this question.
Therefore, we leave it as an interesting open question for future work.

\subsection{Necessity of Assumption \ref{asp:concatenation}}
\label{apx:necessity}
In this section, we show that \cref{asp:concatenation} is necessary for \cref{thm:optimal}.
We will prove this by constructing an instance where the set of possible experiments for the agent violates \cref{asp:concatenation}. 
Note that
the goal of the following contrived construction is not to provide an instance that fits the economic applications. 
It only serves the purpose of showing that \cref{asp:concatenation}
is necessary for our observation in \cref{thm:optimal}.

Suppose that there is a binary state $\states = \{\state_1,\state_2\}$.
Thus, the posterior is uniquely determined by the probability of $\state_1$, 
and thus we will also use $\posterior$ to represent $\posterior(\state_1)$. 
We assume that the prior $\prior = \frac{1}{2}$.
The data broker can choose any experiment, 
and the set of possible experiments for the agent is 
$\{\noexp,\fullexp,\experiment_1\}$.
Recall that $\noexp$ reveals no information and $\fullexp$ reveals full information.
The signal space $\signalspace$ for $\experiment_1$ is $\{s_1,\s_2\}$,
and $\experiment_1(s_1\given \state_1) =\experiment_1(s_2\given \state_2) = \frac{2}{3}$.
The cost function satisfies $\cost(\noexp,\posterior) = 0$
and $\cost(\experiment_1,\posterior) = 1$ for all $\posterior$,
and 
\begin{align*}
\cost(\fullexp,\posterior) &= \begin{cases}
10 & \posterior \in [\frac{1}{3},\frac{2}{3}];\\
1 & \text{otherwise.}
\end{cases}
\end{align*}
We assume that the type space is also binary, i.e., $\types = \{\type_1,\type_2\}$.
There is common prior over the types $\dist$ where $\dist(\type_1) = \dist(\type_2) = \frac{1}{2}$.
We assume that 
\begin{align*}
\V(\posterior,\type_1) &= \begin{cases}
3 & \posterior = \frac{4}{5} \text{ or } \frac{1}{5};\\
0 & \text{otherwise;}
\end{cases}\\
\V(\posterior,\type_2) &= \begin{cases}
10 & \posterior = 0 \text{ or } 1;\\
0 & \text{otherwise.}
\end{cases}
\end{align*}
It is not difficult to verify that in the optimal mechanism $\mech = (\alloc,\price)$, and
we have that $\alloc(\type_1) = \experiment_1, \price(\type_1) = \frac{2}{3}$
and $\alloc(\type_2) = \fullexp, \price(\type_2) = 10$. 
In equilibrium, for type $\type_1$, the agent has an incentive to acquire additional information by conducting experiment $\experiment_1$ regardless of the signal realization. 
In the proof of \cref{thm:optimal}, 
if \cref{asp:concatenation} is satisfied, 
the data broker can simulate the behavior of $\type_1$ by offering
experiment $\experiment_1\circ\experiment_1$ to an agent with type $\type_1$, 
raise the payment for type $\type_1$,
and accordingly increase the expected revenue.
However, this option is not profitable in this constructed example. 
The main reason is that if the data broker offers type $\type_1$ experiment $\experiment_1\circ\experiment_1$, 
an agent with type $\type_2$ will have a strong incentive to deviate her report to $\type_1$, 
since the cost of conducting experiment $\fullexp$ is significantly reduced given the posterior distribution induced by experiment $\experiment_1\circ\experiment_1$. 
Thus, the payment extracted from type~$\type_2$ will be significantly smaller, and the expected revenue of the data broker is reduced in this case.

\section{Optimal Mechanisms}
\label{apx:opt}
Before the proof of the theorems in \cref{sec:opt}, 
we first present the following lemma showing that experiment~$\fullexp$ that reveals full information is the most valuable for the agent.
Recall that $\experiment_{\type,\posterior} \in \experiments$ is the optimal experiment 
the agent chooses when her type is $\type$
and her posterior belief after receiving the signal from the data broker is $\posterior$.

\begin{lemma}\label{lem:bound alloc}
Let $\fullsignal$ be the experiment that reveals full information. 
For any experiment $\signal\in\signals$, any prior $\prior$,
and any type $\type$, 
we have 
\begin{align*}
\expect[\posterior\sim\fullsignal\given\prior]{
\expect[\hat{\posterior}\sim \experiment_{\type,\posterior}\given\posterior]{
\val(\hat{\posterior})}} 
\geq \expect[\posterior\sim\signal\given\prior]{
\expect[\hat{\posterior}\sim \experiment_{\type,\posterior}\given\posterior]{
\val(\hat{\posterior})}}.
\end{align*}
\end{lemma}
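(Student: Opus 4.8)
The plan is to exploit the fact that the experiment $\fullsignal$ which reveals full information is Blackwell more informative than any other experiment $\signal \in \signals$, combined with \cref{asp:concatenation} which guarantees that the agent's feasible set of follow-up experiments is closed under sequential learning. The key observation is that the right-hand side is the agent's expected value from the following two-stage procedure: first run $\signal$, then from each posterior $\posterior$ run the optimal continuation experiment $\experiment_{\type,\posterior} \in \experiments$. I claim this entire procedure is itself feasible and is Blackwell-equivalent to running $\fullsignal$ followed by an appropriately chosen continuation; then because $\val$ is convex (so more informative experiments are weakly more valuable) and because the agent on the left optimizes her continuation, the left-hand side dominates.

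First I would set up the comparison carefully. Since $\fullsignal$ reveals the state, the posterior after $\fullsignal$ is a pointmass, and any further experiment is uninformative, so $\expect[\hat\posterior \sim \experiment_{\type,\posterior}\given\posterior]{\val(\hat\posterior)} = \val(\posterior)$ for $\posterior$ in the support of $\fullsignal\given\prior$ (assuming, as is natural, that $\noexp$ is optimal at a pointmass — this follows since $\val$ is convex and $\costagent(\experiment,\posterior)>0$ for $\experiment \neq \noexp$). Hence the left-hand side equals $\expect[\posterior \sim \fullsignal\given\prior]{\val(\posterior)} = \expect[\state\sim\prior]{\val(\delta_\state)}$, which by convexity of $\val$ and the fact that $\fullsignal$ is Blackwell-maximal is an upper bound on $\expect[\hat\posterior \sim \sigma'\given\prior]{\val(\hat\posterior)}$ for \emph{any} single experiment $\sigma' \in \signals$.

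Next I would argue that the right-hand side is exactly of this form for some $\sigma' \in \signals$: namely $\sigma' = \signal \circ \kappa$ where $\kappa(\sa) = \experiment_{\type,\posterior(\sa)}$ maps each signal realization of $\signal$ (equivalently each induced posterior) to the agent's optimal continuation. By \cref{asp:concatenation}, $\signal \circ \kappa \in \experiments \subseteq \signals$, and by the law of iterated expectations the distribution over final posteriors induced by $\signal\circ\kappa$ given $\prior$ is precisely the two-stage distribution appearing on the right. Therefore
\begin{align*}
\expect[\posterior\sim\signal\given\prior]{\expect[\hat{\posterior}\sim \experiment_{\type,\posterior}\given\posterior]{\val(\hat{\posterior})}}
= \expect[\hat\posterior \sim (\signal\circ\kappa)\given\prior]{\val(\hat\posterior)}
\leq \expect[\state\sim\prior]{\val(\delta_\state)}
= \expect[\posterior\sim\fullsignal\given\prior]{\expect[\hat{\posterior}\sim \experiment_{\type,\posterior}\given\posterior]{\val(\hat{\posterior})}},
\end{align*}
which is the claim.

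The main obstacle I anticipate is the middle inequality, i.e., justifying that $\expect[\hat\posterior\sim\sigma'\given\prior]{\val(\hat\posterior)} \leq \expect[\state\sim\prior]{\val(\delta_\state)}$ for every feasible $\sigma'$. This is the statement that among all experiments, full revelation maximizes the expected value of a convex function of the posterior — a standard consequence of Blackwell's theorem / Jensen's inequality, since the distribution of posteriors under any $\sigma'$ is a mean-preserving contraction of the distribution of posteriors under $\fullsignal$, and $\val$ is convex. One should be slightly careful that this comparison is between the value \emph{gross} of any acquisition cost; since the right-hand side of the lemma as stated involves only $\val$ and not $\costagent$, this is exactly what is needed, and the cost terms do not enter. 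A secondary point to handle cleanly is measurability of the selection $\sa \mapsto \experiment_{\type,\posterior(\sa)}$, but this can be sidestepped by noting that it suffices to prove the inequality with any near-optimal measurable selection and then take a supremum, or by invoking that the right-hand side, being a supremum over $\experiments$ inside an expectation, is itself bounded by the supremum over the larger class $\signals$ of single experiments composed with $\signal$.
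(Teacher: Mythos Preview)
Your argument is correct and matches the paper's proof: reduce the left-hand side to $\expect[\state\sim\prior]{\val(\delta_\state)}$ (since after full revelation the posterior is a pointmass and any continuation is trivial), then bound the right-hand side by this quantity using convexity of $\val$ and Bayes-plausibility of the two-stage posterior distribution.

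One small correction: your appeal to \cref{asp:concatenation} to conclude $\signal\circ\kappa\in\experiments$ is both unnecessary and technically misapplied, because $\signal$ is an arbitrary element of $\signals$, not of $\experiments$, so the closure-under-concatenation hypothesis of \cref{asp:concatenation} does not apply. Fortunately you do not need $\signal\circ\kappa\in\experiments$ at all: what you actually use is that the composite two-stage procedure induces \emph{some} Bayes-plausible distribution over posteriors (equivalently, \emph{some} element of $\signals$), which follows immediately from the law of iterated expectations. The paper's proof accordingly does not invoke \cref{asp:concatenation} here; convexity of $\val$ alone suffices.
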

\begin{proof}
For the fully informative experiment $\fullsignal$, 
for any experiment $\signal$, any prior $\prior$,
and any type $\type$, 
we have 
\begin{align*}
\expect[\posterior\sim\fullsignal\given\prior]{
\expect[\hat{\posterior}\sim \experiment_{\type,\posterior}\given\posterior]{
\val(\hat{\posterior})}} 
= \expect[\state\sim\prior]{
\val(\state)} 
\geq \expect[\posterior\sim\signal\given\prior]{
\expect[\hat{\posterior}\sim \experiment_{\type,\posterior}\given\posterior]{
\val(\hat{\posterior})}}. 
\end{align*}
The inequality holds since $\val$ is convex in $\posterior$ and $\fullsignal$ fully reveals the states.
\end{proof}

\lemincentive*
\begin{proof}[Proof of \cref{lem:incentive}]
Given allocation rule $\alloc$, 
by the envelope theorem, for any incentive compatible mechanism $\mech$,
the interim utility $\Util(\type)$ is convex in $\type$
and 
\begin{align*}
\Util(\type) = \int_{\underline{\type}}^{\type} 
\expect[\posterior\sim\alloc(z)\given\prior]{
\expect[\hat{\posterior}\sim \experiment_{z,\posterior}\given \posterior]{\val(\hat{\posterior})}} \dd z
+ \Util(\underline{\type}).
\end{align*}
Note that the mechanism $\mech = (\alloc,\price)$ is individually rational
if and only if $\Util(\type) \geq \V(\prior, \experiments, \type)$ for any type $\type$,
i.e., 
\begin{align*}
\int_{\underline{\type}}^{\type} 
\expect[\posterior\sim\alloc(z)\given\prior]{
\expect[\hat{\posterior}\sim \experiment_{z,\posterior}\given \posterior]{\val(\hat{\posterior})}} \dd z
+ \Util(\underline{\type})
\geq \V(\prior, \experiments, \type).
\end{align*}
Moreover, the corresponding payment rule for mechanism $\mech$ is 
\begin{align*}
\price(\type) = \expect[\posterior\sim\alloc(\type)\given\prior]{
\V(\posterior, \experiments, \type) 
- \int_{\underline{\type}}^{\type} 
\expect[\posterior\sim\alloc(z)\given\prior]{
\expect[\hat{\posterior}\sim \experiment_{z,\posterior}\given \posterior]{
\val(\hat{\posterior})}} \dd z} - 
\Util(\underline{\type}).
\end{align*}
Next we verify the incentive constraints of the given mechanism.
Note that for any $\type, \type'\in \types$,
letting $\Util(\type; \type')$ be the utility of the agent with type $\type$ when she reports $\type'$
in mechanism~$\mech$,
we have 
\begin{align*}
&\Util(\type) - \Util(\type; \type') \\
&= \Util(\type) - \Util(\type') 
- \expect[\posterior\sim\alloc(\type')\given\prior]{
\V(\posterior, \experiments, \type)}
+ \expect[\posterior\sim\alloc(\type')\given\prior]{
\V(\posterior, \experiments, \type')}\\
&= \int_{\type'}^{\type} 
\expect[\posterior\sim\alloc(z)\given\prior]{
\expect[\hat{\posterior}\sim \experiment_{z,\posterior}\given \posterior]{\val(\hat{\posterior})}} \dd z 
- \expect[\posterior\sim\alloc(\type')\given\prior]{
\int_{\type'}^{\type} \V_3(\posterior, \experiments, z) \dd z} \\
&= \int_{\type'}^{\type} \left(\expect[\posterior\sim\alloc(z)\given\prior]{\expect[\hat{\posterior}\sim \experiment_{z,\posterior}\given \posterior]{\val(\hat{\posterior})}} 
- \expect[\posterior\sim\alloc(\type')\given\prior]{\expect[\hat{\posterior}\sim \experiment_{z,\posterior}\given \posterior]{\val(\hat{\posterior})}}\right) \dd z.
\end{align*}
Thus, $\Util(\type) - \Util(\type; \type')\geq 0$
if and only if the integral constraint in the statement of \cref{lem:incentive} is satisfied.
\end{proof}

\propnoacquisition*
\begin{proof}
Let $\mech$ be the mechanism that reveals full information if $\phi(\type)\geq 0$
and reveals no information if $\phi(\type) < 0$.
When $|\experiments| = 1$, 
for any type $\type$ and any posterior $\posterior$,
we have that $\expect[\hat{\posterior}\sim \experiment_{\type,\posterior}\given\posterior]{
\val(\hat{\posterior})} 
= \val(\posterior)$,
and hence the integral constraint simplifies to 
\begin{align*}
&\int_{\type'}^{\type} \expect[\posterior\sim\alloc(z)\given\prior]{\expect[\hat{\posterior}\sim \experiment_{z,\posterior}\given \posterior]{\val(\hat{\posterior})}} 
- \expect[\posterior\sim\alloc(\type')\given\prior]{\expect[\hat{\posterior}\sim \experiment_{z,\posterior}\given \posterior]{\val(\hat{\posterior})}} \dd z \\
=& \int_{\type'}^{\type} \expect[\posterior\sim\alloc(z)\given\prior]{\val(\posterior)} 
- \expect[\posterior\sim\alloc(\type')\given\prior]{\val(\posterior)} \dd z 
\geq 0.
\end{align*}
Note that this is equivalent to the condition that 
$\expect[\posterior\sim\alloc(\type)\given\prior]{\val(\posterior)}$
is nondecreasing in $\type$. 
Moreover, similar to \citet*{myerson1981optimal},
the individual rationality constraint simplifies to the case in which the utility of the lowest type $\underline{\type}$ is at least her outside option. 
Thus, the problem of revenue maximization simplifies to 
\begin{align*}
\max & \quad
\expect[\type\sim\dist]{
\virtual(\type) \cdot 
\expect[\posterior\sim\alloc(\type)\given\prior]{
\val(\posterior)}}
- \Util(\underline{\type}) \\
\text{s.t.} & \quad
\expect[\posterior\sim\alloc(\type)\given\prior]{\val(\posterior)}
\text{ is non-decreasing in } \type,\\
&\quad \Util(\underline{\type}) \geq \V(\prior, \experiments, \underline{\type}).
\end{align*}
Note that $\expect[\posterior\sim\alloc(\type)\given\prior]{\val(\posterior)}$
is maximized by revealing full information 
and minimized by revealing no information. 
By \citet*{myerson1981optimal}, 
it is easy to verify that the allocation rule of mechanism $\mech$ maximizes the expected virtual surplus, 
and hence mechanism $\mech$ is the revenue-optimal mechanism.
\end{proof}

\thmoptlinear*
\begin{proof}[Proof of \cref{thm:optimal linear}]
We first show that allocation rule $\hat{\alloc}$ combined with $\Util(\underline{\type}) = \V(\prior, \experiments, \underline{\type})$ 
can be implemented as an incentive compatible and individually rational mechanism. 
One way to prove this is to verify the constraints specified in \cref{lem:incentive}. 
However, directly verifying the incentive constraints in \cref{lem:incentive} for allocation $\hat{\alloc}$
might be challenging because we impose little structure on the information costs.\footnote{Without additional structures on the costs, 
it is difficult to characterize the optimal strategy $\experiment_{\type,\posterior}$ given any type $\type$ and posterior $\posterior$.
} 
Thus, we adopt an alternative approach by explicitly constructing an incentive compatible and individually rational mechanism $\widehat{\mech}$. 
Then, we show that the constructed mechanism has allocation $\hat{\alloc}$ and utility for the lowest type 
$\Util(\underline{\type}) = \V(\prior, \experiments, \underline{\type})$.

First, consider a mechanism $\mech'$ that posts a deterministic price $p$ for revealing full information. 
Price $p$ is chosen such that the agent purchases information from the seller if and only if $\type \geq \monopt$. 
Note that given mechanism $\mech'$, for an agent with type $\type < \monopt$, 
she will choose not to participate the auction 
and subsequently conduct experiment
\begin{align*}
\experiment_{\type} =
\arg\max_{\experiment\in\experiments} \,
\expect[\posterior \sim \experiment\given\prior]{\V(\posterior, \type)}
- \costagent(\experiment,\prior).
\end{align*}
We assume that the agent breaks ties by maximizing the cost $\costagent(\experiment,\prior)$. 
Now, let $\widehat{\mech}$ be the mechanism that reveals full information for types $\type \geq \monopt$
with price $p$ 
and commits to experiment $\experiment_{\type}$
for types $\type < \monopt$
with price $\costagent(\experiment_{\type},\prior)$. 
It is easy to verify that $\widehat{\mech}$ has allocation rule $\hat{\alloc}$
and the utility of the lowest type $\underline{\type}$ in $\widehat{\mech}$ 
is $\V(\prior, \experiments, \underline{\type})$.
Moreover, by the proof of \cref{thm:optimal},
mechanism $\widehat{\mech}$ is incentive compatible and individually rational. 

Note that when the posterior $\posterior$
is in the support of $\fullsignal\given\prior$, 
the agent will not acquire additional costly information 
since $\fullsignal$ fully reveals the state. 
Moreover, when the posterior $\posterior$
is in the support of $\experiment_{\type}\given\prior$,
by \cref{asp:concatenation}, 
the agent will not acquire additional costly information
because otherwise $\experiment_{\type}$ is not the utility-maximizing experiment given prior $\prior$.
Combining the observations, we have that 
$\costagent(\experiment_{\type,\posterior},\posterior) = 0$ for $\posterior$
in the support of $\hat{\alloc}(\type)\given\prior$,
and hence by \Cref{eq:rev}, the revenue of mechanism $\widehat{\mech}$ is 
\begin{align}
&\Rev(\widehat{\mech}) = \expect[\type\sim\dist]{
\expect[\posterior\sim\hat{\alloc}(\type)\given\prior]{
\virtual(\type) \cdot \expect[\hat{\posterior}\sim \experiment_{\type,\posterior}\given\posterior]{
\val(\hat{\posterior})}
- \costagent(\experiment_{\type,\posterior},\posterior)}}
- \Util(\underline{\type}) \nonumber\\
&= \expect[\type\sim\dist]{
\virtual(\type) \cdot 
\expect[\posterior\sim\hat{\alloc}(\type)\given\prior]{
\expect[\hat{\posterior}\sim \experiment_{\type,\posterior}\given\posterior]{
\val(\hat{\posterior})}}}
- \V(\prior, \experiments, \underline{\type}) \nonumber\\
&= \int_{\monopt}^{\bar{\type}}
\density(\type)\cdot\virtual(\type) \cdot 
\expect[\posterior\sim \fullsignal\given\prior]{
\val(\posterior)}
\dd \type
+\int_{\underline{\type}}^{\monopt}
\density(\type)\cdot\virtual(\type) \cdot 
\expect[\posterior\sim \experiment_{\type,\prior}\given\prior]{
\val(\posterior)}
\dd \type
- \V(\prior, \experiments, \underline{\type}).\label{eq:opt rev}
\end{align}
Now, consider any incentive compatible and individually rational mechanism $\mech$ with allocation $\alloc$; 
again, by \Cref{eq:rev}, the revenue of mechanism $\mech$ is 
\begin{align*}
\Rev(\mech) &= \expect[\type\sim\dist]{
\expect[\posterior\sim\alloc(\type)\given\prior]{
\virtual(\type) \cdot \expect[\hat{\posterior}\sim \experiment_{\type,\posterior}\given\posterior]{
\val(\hat{\posterior})}
- \costagent(\experiment_{\type,\posterior},\posterior)}}
- \Util(\underline{\type})\\
&\leq \expect[\type\sim\dist]{
\virtual(\type) \cdot 
\expect[\posterior\sim\alloc(\type)\given\prior]{
\expect[\hat{\posterior}\sim \experiment_{\type,\posterior}\given\posterior]{
\val(\hat{\posterior})}}}
- \Util(\underline{\type}),
\end{align*}
where the inequality holds since $\costagent(\experiment_{\type,\posterior},\posterior) \geq 0$ for any posterior $\posterior$.
For any type $\type \geq \monopt$, i.e., 
$\virtual(\type) \geq 0$, 
by applying \cref{lem:bound alloc},
the contribution of revenue from type $\type$ is 
\begin{align}
\Rev(\mech; \type) &\triangleq 
\virtual(\type) \cdot 
\expect[\posterior\sim\alloc(\type)\given\prior]{
\expect[\hat{\posterior}\sim \experiment_{\type,\posterior}\given\posterior]{
\val(\hat{\posterior})}} \nonumber\\
&\leq \virtual(\type) \cdot 
\expect[\posterior\sim\fullsignal\given\prior]{
\expect[\hat{\posterior}\sim \experiment_{\type,\posterior}\given\posterior]{
\val(\hat{\posterior})}} \nonumber\\
&= \virtual(\type) \cdot 
\expect[\posterior\sim\fullsignal\given\prior]{
\val(\posterior)}. \label{eq:rev postive virtual}
\end{align}

We first show that \cref{asp:regular} implies $(\density(\type)\cdot \virtual(\type))' \geq 0$
for any $\type\leq\monopt$. 
If $\density'(\type) \leq 0$, we have that 
\begin{align*}
(\density(\type)\cdot \virtual(\type))' = \density'(\type)\cdot \virtual(\type) + \density(\type)\cdot \virtual'(\type) \geq 0
\end{align*}
since $\virtual(\type)\leq 0$ for any $\type\leq\monopt$.
If $\density'(\type) \geq 0$, 
we have that 
\begin{align*}
(\density(\type)\cdot \virtual(\type))' =
(\density(\type)\cdot\type - (1-\dist(\type)))'
= 2\density(\type) + \density'(\type)\cdot\type\geq 0.
\end{align*}
Next, we bound the revenue contribution from types $\type < \monopt$, i.e., 
$\virtual(\type) < 0$.  
\begin{align}
& \expect[\type\sim\dist]{\Rev(\mech; \type) \cdot \indicate{\type < \monopt}} - \Util(\underline{\type}) \nonumber\\
&= \int_{\underline{\type}}^{\monopt}
\density(\type)\cdot \virtual(\type) \cdot 
\expect[\posterior\sim\alloc(\type)\given\prior]{
\expect[\hat{\posterior}\sim \experiment_{\type,\posterior}\given\posterior]{
\val(\hat{\posterior})}} 
\dd \type
- \Util(\underline{\type}) \nonumber\\
&= - \int_{\underline{\type}}^{\monopt} (\density(\type)\cdot \virtual(\type))'
\int_{\underline{\type}}^{\type}
\expect[\posterior\sim\alloc(z)\given\prior]{\expect[\hat{\posterior}\sim \experiment_{z,\posterior}\given\posterior]{
\val(\hat{\posterior})}} \dd z
\dd \type
- \Util(\underline{\type}) \nonumber\\
&\leq - \int_{\underline{\type}}^{\monopt} (\density(\type)\cdot \virtual(\type))'
\cdot (\V(\prior, \experiments, \type) 
- \Util(\underline{\type})) 
\dd \type
- \Util(\underline{\type}) \nonumber\\
&= - \int_{\underline{\type}}^{\monopt} (\density(\type)\cdot \virtual(\type))'
\cdot \V(\prior, \experiments, \type)
\dd \type
- \Util(\underline{\type}) (\density(\underline{\type})\cdot \virtual(\underline{\type}) + 1) \nonumber\\
&\leq - \int_{\underline{\type}}^{\monopt} (\density(\type)\cdot \virtual(\type))'
\cdot \V(\prior, \experiments, \type)
\dd \type
- \V(\prior, \experiments, \underline{\type}) (\density(\underline{\type})\cdot \virtual(\underline{\type}) + 1) \nonumber\\
&= \int_{\underline{\type}}^{\monopt}
\density(\type)\cdot\virtual(\type) \cdot 
\expect[\posterior\sim \experiment_{\type,\prior}\given\prior]{
\val(\posterior)}
\dd \type
- \V(\prior, \experiments, \underline{\type}).
\label{eq:rev negative virtual}
\end{align}
The second equality holds by integration by parts. 
The first inequality holds by (1) $(\density(\type)\cdot \virtual(\type))'$ is nonnegative for $\type\leq \monopt$,
and (2) $\int_{\underline{\type}}^{\type} 
\expect[\posterior\sim\alloc(z)\given\prior]{
\expect[\hat{\posterior}\sim \experiment_{z,\posterior}\given \posterior]{\val(\hat{\posterior})}} \dd z
\geq \V(\prior, \experiments, \type)
- \Util(\underline{\type})$
according to the individual rationality constraints in \cref{lem:incentive}.
The last inequality holds since 
$\Util(\underline{\type})\geq \V(\prior, \experiments, \underline{\type})$
and $\density(\underline{\type})\cdot \virtual(\underline{\type}) + 1
= \density(\underline{\type}) \cdot \underline{\type} + \dist(\underline{\type}) \geq 0$.
Finally, the last equality holds by integration by parts and the facts that $\virtual(\monopt) = 0$ and
\begin{align*}
\V(\prior, \experiments, \type)
= \V(\prior, \experiments, \underline{\type})
+ \int_{\underline{\type}}^\type \expect[\posterior\sim \experiment_{z,\prior}\given\prior]{
\val(\posterior)}
\dd z.
\end{align*}
Combining \Cref{eq:opt rev,eq:rev postive virtual,eq:rev negative virtual}, we have
\begin{align*}
&\Rev(\mech) \leq 
\expect[\type\sim\dist]{\Rev(\mech; \type) \cdot \indicate{\type \geq \monopt}} 
+ \expect[\type\sim\dist]{\Rev(\mech; \type) \cdot \indicate{\type < \monopt}} - \Util(\underline{\type})\\
&\leq \int_{\monopt}^{\bar{\type}}
\density(\type)\cdot\virtual(\type) \cdot 
\expect[\posterior\sim \fullsignal\given\prior]{
\val(\posterior)}
\dd \type
+\int_{\underline{\type}}^{\monopt}
\density(\type)\cdot\virtual(\type) \cdot 
\expect[\posterior\sim \experiment_{\type,\prior}\given\prior]{
\val(\posterior)}
\dd \type
- \V(\prior, \experiments, \underline{\type}) \\
&= \Rev(\widehat{\mech}). 
\end{align*}
Thus, mechanism $\widehat{\mech}$ is revenue-optimal. 
\end{proof}

\section{Pricing for Full Information}
\label{apx:pricing}
Before the proof of \cref{thm:approx}, we first introduce the definition of quantiles and revenue curves, 
which are helpful for bounding the approximation ratio. 
For any distribution $\dist$, 
let $\quant_{\dist}(\type)\triangleq \Pr_{z\sim\dist}[z\geq\type]$
be the quantile corresponding to type $\type$. 
Accordingly, we can define $\type(\quant)$ as the type corresponds to quantile $\quant$.
The revenue curve as a function of the quantile is defined as $R_{\dist}(\quant) \triangleq \quant\cdot\type(\quant)$.
Note that the regularity condition in \cref{asp:regular}
is equivalent to the concavity assumption for the revenue curve. 
\begin{lemma}[\citealp*{myerson1981optimal}]\label{lem:regular}
A distribution $\dist$ is regular if and only if $R_{\dist}(\quant)$ is concave in~$\quant$.
\end{lemma}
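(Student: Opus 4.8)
The plan is to express the slope of the revenue curve directly in terms of the virtual value function, and then transfer monotonicity from one object to the other through the (strictly decreasing) change of variables between types and quantiles. The whole equivalence will follow from a single derivative identity.

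First I would record the change of variables. Since $\quant_{\dist}(\type) = \Pr_{z\sim\dist}[z \geq \type] = 1 - \dist(\type)$ and $\dist$ has positive density $\density$, the map $\type \mapsto \quant_{\dist}(\type)$ is strictly decreasing; hence its inverse $\type(\quant)$ is well-defined and strictly decreasing in $\quant$, with $\type'(\quant) = -1/\density(\type(\quant))$. Next I would differentiate the revenue curve $R_{\dist}(\quant) = \quant \cdot \type(\quant)$ using the product rule together with this expression for $\type'(\quant)$, obtaining
\[
R_{\dist}'(\quant) = \type(\quant) + \quant \cdot \type'(\quant) = \type(\quant) - \frac{\quant}{\density(\type(\quant))} = \type(\quant) - \frac{1-\dist(\type(\quant))}{\density(\type(\quant))} = \virtual(\type(\quant)),
\]
where the third equality substitutes $\quant = 1 - \dist(\type(\quant))$. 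Thus the marginal revenue in quantile space is exactly the virtual value evaluated at the corresponding type.

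Finally I would conclude both directions at once. The curve $R_{\dist}$ is concave in $\quant$ if and only if $R_{\dist}'(\quant) = \virtual(\type(\quant))$ is non-increasing in $\quant$. Because $\type(\quant)$ is strictly decreasing in $\quant$, the composition $\virtual \circ \type$ is non-increasing in $\quant$ precisely when $\virtual$ is non-decreasing in $\type$, which is exactly the definition of regularity in \cref{asp:regular}. This gives the claimed equivalence.

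The main obstacle is the derivative identity $R_{\dist}'(\quant) = \virtual(\type(\quant))$, and in particular getting the sign of $\type'(\quant)$ right through the chain rule; once this identity is in hand, the equivalence is immediate since composing with a strictly decreasing reparametrization reverses the direction of monotonicity. If one prefers to avoid differentiability assumptions on $\dist$, the same conclusion can be reached by comparing secant slopes of $R_{\dist}$ across two quantiles and rewriting each as an average of virtual values over the corresponding type interval, but the differentiable computation above is the cleanest route under the maintained regularity of $\dist$.
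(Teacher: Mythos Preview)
Your argument is correct and is the standard derivation: the key identity $R_{\dist}'(\quant)=\virtual(\type(\quant))$ together with the strictly decreasing reparametrization $\type(\quant)$ immediately yields the equivalence. The paper, however, does not supply its own proof of this lemma; it is stated with attribution to \citet{myerson1981optimal} and used as a black box in the proof of \cref{thm:approx}. So there is no paper proof to compare against, and your write-up would serve perfectly well as a self-contained justification of the cited fact.
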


\thmapprox*
\begin{proof}
We first normalize the primitives such that $\monopt\cdot \quant(\monopt) = 1$.
For any type~$\type$, let $\ir(\type)\triangleq \V(\prior, \experiments, \type)$ be the value of the agent
for not participating in the auction. 
It is easy to verify that $\ir(\type)$ is convex in $\type$.
Let $\bx \triangleq \expect[\state\sim\prior]{\val(\state)}$ be the maximum possible allocation. 
According to \cref{thm:optimal linear}, 
if the distribution $\dist$ is regular, 
in the revenue-optimal mechanism, 
the expected utility of the agent is $\ir(\type)$ for any $\type<\monopt$
and is $(\type-\monopt)\cdot \bx + \ir(\monopt)$
for any $\type\geq \monopt$.

Suppose that $\hat{\type}$ is the cutoff type that participates the auction in the optimal price posting mechanism for distribution $\dist$. 
It is easy to verify that $\hat{\type}\leq \monopt$
since revealing full information to any type above the monopoly type only increases the expected revenue. 
Moreover, for any type $\type < \monopt$, 
since the payment that induces $\hat{\type}$ to be the cutoff type
is $\hat{\type}\cdot \bx - \ir(\hat{\type})$,
we have that 
\begin{align*}
(\hat{\type}\cdot \bx - \ir(\hat{\type})) \cdot \quant_{\dist}(\hat{\type})
\geq (\type\cdot \bx - \ir(\type)) \cdot \quant_{\dist}(\type).
\end{align*}
That is, any type $\type < \monopt$, 
\begin{align*}
\quant_{\dist}(\type) \leq 
\frac{(\hat{\type}\cdot \bx - \ir(\hat{\type})) \cdot \quant_{\dist}(\hat{\type})}{\type\cdot \bx - \ir(\type)}.
\end{align*}
\begin{figure}[t]
\centering
\begin{tikzpicture}[scale = 0.48]

\draw (-0.2,0) -- (12.5, 0);
\draw (0, -0.2) -- (0, 7.2);

\begin{scope}

\draw plot [smooth, tension=0.6] coordinates { (0,0) (2,5) (3, 6)};
\draw plot [smooth, tension=0.7] coordinates { (3, 6) (9, 5) (12, 0)};

\draw [dashed] plot [smooth, tension=0.7] coordinates { (8, 7) (9, 5) (12, 4)};
\end{scope}

\draw [red, dashed, thick] (0, 6) -- (7.5, 6);
\draw [red, dashed, thick] (7.5, 6) -- (12, 3);
\draw [red, dashed, thick] (12, 3) -- (12, 0);

\draw (-0.5, 6) node {$1$};
\draw (-0.5, 3) node {$r$};
\draw [dotted] (0, 3) -- (12, 3);

\draw (0, -0.8) node {$0$};
\draw (12, 0) -- (12, 0.2);
\draw (12, -0.8) node {$1$};

\draw (13.6, 0) node {$\quant$};

\draw (3, 0) -- (3, 0.2);
\draw [dotted] (3, 0) -- (3, 6);
\draw (3, -0.8) node {$q(\monopt)$};

\draw (9, 0) -- (9, 0.2);
\draw [dotted] (9, 0) -- (9, 5);
\draw (9, -0.8) node {$q(\hat{\type})$};

\draw (7.5, 0) -- (7.5, 0.2);
\draw [dotted] (7.5, 0) -- (7.5, 6);
\draw (7.5, -0.8) node {$\bar{q}$};

\draw (0, 8) node {$R(\quant)$};





\end{tikzpicture}
\caption{\label{fig:eqrev} 
The figure illustrates the reduction in the type distribution 
that maximizes the approximation ratio between the optimal revenue and the price posting revenue.
The black solid curve is the revenue curve for distribution $\dist$,
and the red dashed curve is the revenue curve for distribution $\hat{\dist}$. 
The black dashed curve is the revenue curve $\bar{\dist}$ such that the seller is indifferent to deterministically selling at any prices with negative virtual value. 
}
\end{figure}
Let $\bar{\dist}$ be the distribution such that 
\begin{align*}
\quant_{\bar{\dist}}(\type) = \frac{(\hat{\type}\cdot \bx - \ir(\hat{\type})) \cdot \quant_{\dist}(\hat{\type})}{\type\cdot \bx - \ir(\type)}
\end{align*}
for any type $\type$.
Thus, the virtual value function $\bar{\virtual}(\type)$ for distribution $\bar{\dist}$
is 
\begin{align*}
\bar{\virtual}(\type) = \type - \frac{\type\cdot \bx - \ir(\type)}{\bx - \ir'(\type)} \leq 0.
\end{align*}
Moreover, 
\begin{align*}
\bar{\virtual}'(\type) = \frac{\ir''(\type) \cdot (\ir(\type) - \type\cdot \bx)}{(\bx - \ir'(\type))^2} 
\leq 0.
\end{align*}
Thus, the revenue curve such that the seller is indifferent to selling at any price is convex. 
Let $\hat{\dist}$ be the distribution with the piecewise linear revenue curve illustrated in \Cref{fig:eqrev}.
Thus, we have that 
\begin{align*}
\quant_{\hat{\dist}}(\type) = \begin{cases}
\frac{1}{\type} & \type\geq \frac{1}{\bar{\quant}},\\
\frac{1-r\bar{\quant}}{\type(1-\bar{\quant})+1-r}
& \type< \frac{1}{\bar{\quant}}.
\end{cases}
\end{align*}
Let $\price(\type)\triangleq \type\cdot\bx - \ir(\type)\geq 0$.
First, note that distribution $\hat{\dist}$ is first-order
stochastically dominated by $\bar{\dist}$, and 
the optimal revenue from posted pricing is weakly smaller for distribution~$\hat{\dist}$.
Moreover, both distributions achieve the same price posting revenue by choosing price $\price(\hat{\type})$ such that the cutoff type is $\hat{\type}$. 
Thus, the optimal price posting revenue for distribution $\hat{\dist}$ is attained by choosing price $\price(\hat{\type})$. 
This further indicates that optimal price posting revenue is the same for distributions $\dist$ and $\hat{\dist}$,
i.e., 
$\pp(\dist,\ir) = \pp(\hat{\dist},\ir)$.
Second, since distribution~$\dist$ is first-order stochastically dominated by $\hat{\dist}$,
it is easy to verify that 
$\opt{\dist,\ir} \leq \opt{\hat{\dist},\ir}$.
Therefore, the ratio between the price posting revenue and the optimal revenue is minimized when the type distribution is $\hat{\dist}$.

For distribution $\hat{\dist}$, since the optimal price is $\price(\hat{\type})$, we have 
\begin{align*}
(\hat{\type}\cdot \bx - \ir(\hat{\type})) \cdot \quant_{\hat{\dist}}(\hat{\type})
\geq (\type\cdot \bx - \ir(\type)) \cdot \quant_{\hat{\dist}}(\type).
\end{align*}
Let $\zeta = (\hat{\type}\cdot \bx - \ir(\hat{\type})) \cdot \quant_{\hat{\dist}}(\hat{\type})$,
and let
\begin{align*}
\hat{\ir}(\type) = \type\cdot \bx - \frac{\zeta}{\quant_{\hat{\dist}}(\type)}.
\end{align*}
It is easy to verify that 
$\pp(\hat{\dist},\ir) = \pp(\hat{\dist},\hat{\ir}) = \zeta$.
Moreover, $\hat{\ir}(\type)$ is convex and $\ir(\type) \geq \hat{\ir}(\type)$ for any $\type$,
which implies that any feasible mechanism for $\ir$ is also feasible for $\hat{\ir}$,
and hence 
$\opt{\hat{\dist},\ir} \leq \opt{\hat{\dist},\hat{\ir}}$.
Thus, to prove \cref{thm:approx}, it is sufficient to bound 
$\frac{\pp(\hat{\dist},\hat{\ir})}{\opt{\hat{\dist},\hat{\ir}}}$.
Note that by construction, the monopoly type for distribution $\hat{\dist}$
is $\frac{1}{\bar{\quant}}$. 
Hence, the optimal revenue is 
\begin{align*}
\opt{\hat{\dist},\hat{\ir}} &= \int_{r}^{\frac{1}{\bar{\quant}}} \hat{\density}(\type) (\type\cdot \hat{\ir}'(\type) - \hat{\ir}(\type)) \dd \type
+ \left(\frac{1}{\bar{\quant}}\cdot \bx - \hat{\ir}(\frac{1}{\bar{\quant}})\right)\cdot \bar{\quant}\\
&= \int_{r}^{\frac{1}{\bar{\quant}}} \hat{\density}(\type) \cdot \zeta\cdot \frac{1-r}{1-r\bar{\quant}} \dd \type
+\zeta\\
&= \zeta\cdot \left(\frac{(1-\bar{\quant})(1-r)}{1-r\bar{\quant}}+1\right) \leq 2\zeta,
\end{align*}
where the inequality is tight if $\bar{\quant} = r = 0$.
Combining the observations, for any distribution $\dist$
and any function $\ir$ induced by the set of experiments $\experiments$, we have 
\begin{equation*}
\frac{\pp(\dist,\ir)}{\opt{\dist,\ir}} 
\geq \frac{\pp(\hat{\dist},\hat{\ir})}{\opt{\hat{\dist},\hat{\ir}}}
\geq \frac{1}{2}.\qedhere
\end{equation*}
\end{proof}

\propinformativeprior*
\begin{proof}
By \cref{thm:optimal linear}, it is sufficient to show that 
if 
$$
\noexp \in \arg\max_{\experiment\in\experiments} \,
\expect[\hat{\posterior} \sim \experiment\given\prior]{\V(\hat{\posterior}, \monopt)}
- \costagent(\experiment,\prior),
$$
then $\experiment_{\type,\prior} = \noexp$
for any $\type < \monopt$.
Suppose by contradiction that there exists 
$\type < \monopt$ such that $\costagent(\experiment_{\type,\prior},\prior)>0$, 
i.e., 
\begin{align*}
\expect[\posterior\sim\experiment_{\type,\prior}\given\prior]{\val(\posterior)}\cdot\type - \costagent(\experiment_{\type,\prior},\prior)
\geq \expect[\posterior\sim\nosignal\given\prior]{\val(\posterior)}\cdot\type. 
\end{align*}
Since $\monopt > \type$, 
we have that 
\begin{align*}
&\expect[\posterior\sim\experiment_{\type,\prior}\given\prior]{\val(\posterior)}\cdot\monopt 
- \costagent(\experiment_{\type,\prior},\prior)
- \expect[\posterior\sim\nosignal\given\prior]{\val(\posterior)}\cdot\monopt \\
>\,& 
\left(\expect[\posterior\sim\experiment_{\type,\prior}\given\prior]{\val(\posterior)}
- \expect[\posterior\sim\nosignal\given\prior]{\val(\posterior)}
\right) \cdot\type 
- \costagent(\experiment_{\type,\prior},\prior)
\geq 0,
\end{align*}
contradicting the assumption that $\noexp$ is one of the optimal choices for type $\monopt$ given prior~$\prior$.
\end{proof}

In the following lemma, 
in the case in which the state space $\states$ is finite,
we formalize the intuition that when the prior is sufficiently close to the degenerate pointmass distribution,
the marginal cost of additional information exceeds the marginal benefit of additional information.
\begin{definition}
The set of possible experiments $\experiments$ 
\emph{is finitely generated}
if it is generated by $\noexp$ and a finite set~$\experiments'$ through sequential learning, 
where $\noexp$ is the experiment that always reveals no additional information with zero cost, 
and any $\experiment \in |\experiments'|$ is an experiment that provides an informative signal about the state with fixed cost $c_{\experiment}>0$.
\end{definition}
\begin{restatable}{lemma}{leminformativeprior}\label{lem:informative prior}
Suppose that $\states$ is finite and $\experiments$ is finitely generated.
Suppose that there exists $\bar{\val} < \infty$ such that $\max_{\state\in\states}\val(\state) \leq \bar{\val}$
and $\val(\posterior) \geq \min_{\state\in\states} \posterior(\state) \cdot\val(\state)$ for any $\posterior$.
Then, there exists $\epsilon > 0$
such that any prior $\prior$
satisfying $\prior(\state) > 1-\epsilon$ for some $\state \in \states$
is sufficiently informative.\footnote{We can have similar results when $\experiments$ is not finitely generated. 
For example,
when the cost function is the reduction in entropy,
by applying the techniques in \citet*{caplin2019rational}, 
for any valuation function $\val$,
there exists $\epsilon>0$ such that
any prior $\prior$
satisfying $\prior(\state) > 1-\epsilon$ for some $\state \in \states$
is sufficiently informative.
} 
\end{restatable}
\begin{proof}
Let $c_m = \min_{\experiment\in\experiments'} > 0$.
By construction, for any experiment $\experiment \in \experiments$, we have $\costagent(\experiment,\posterior) \geq c_m$ for any $\posterior$.
Let $\state^*$ be the state such that 
$\prior(\state^*) > 1-\epsilon$.
Given prior $\prior$, the utility increase of type $\monopt$ from additional information 
is at most 
\begin{align*}
\monopt \cdot\left(\sum_{\state\in\states} \prior(\state) \val(\state) - \val(\prior)\right)
\leq \monopt \cdot\left(\sum_{\state\neq\state^*} \prior(\state) \val(\state)\right)
< \monopt \cdot \epsilon \cdot \bar{\val}.
\end{align*}
The first inequality holds since $\val(\prior) \geq \prior(\state^*) \val(\state^*)$, 
and the second inequality holds since 
$\val(\state) \leq \bar{\val}$
and $\sum_{\state\neq\state^*} \prior(\state) < \epsilon$.
Thus, when $\epsilon = \frac{c_m}{\monopt \cdot \bar{\val}}$, 
the cost of information is always higher than the benefit of information, 
and the agent with type $\monopt$ will never acquire any additional information given prior $\prior$.
\end{proof}

\section{Monopoly Auction}
\label{apx:monopoly_exp}
In this section, we examine \cref{exp:firm auction} 
and show that an agent who
is a firm selling a product with different quality levels 
satisfies the linear valuation assumption. 
In fact, it is sufficient to show that for any $\posterior$ 
and any pair of cost functions $c,\hat{c} > 0$, 
letting $\type = \frac{1}{c},\hat{\type}=\frac{1}{\hat{c}}$,
we have $\frac{\V(\posterior,\hat{\type})}{\V(\posterior,\type)} = \frac{\hat{\type}}{\type}$.

Given seller type $\type$, for any mechanism with allocation and payment rule $\alloc,\price$,
consider an alternative mechanism with an allocation and payment rule $\hat{\alloc},\hat{\price}$,
such that 
\begin{align*}
\hat{\alloc}(\type) = \alloc(\type)\cdot\frac{\hat{\type}}{\type}
\quad\text{and}\quad
\hat{\price}(\type) = \price(\type)\cdot\frac{\hat{\type}}{\type},
\quad\forall \type.
\end{align*}
It is easy to verify that $(\hat{\alloc},\hat{\price})$ satisfies incentive compatibility and individual rationality. 
The expected revenue given $(\hat{\alloc},\hat{\price})$ when the cost is $\hat{c}$ is 
\begin{align*}
\expect[\type]{\hat{\price}(\type) - \frac{1}{2\hat{\type}}\cdot \hat{\alloc}^2(\type)}
= \expect[\type]{\price(\type)\cdot\frac{\hat{\type}}{\type} - \frac{1}{2\hat{\type}}\cdot \rbr{\alloc(\type)\cdot\frac{\hat{\type}}{\type}}^2}
= \frac{\hat{\type}}{\type}\cdot\expect[\type]{\price(\type) - \frac{1}{2\type}\cdot \alloc(\type)}.
\end{align*}
Since the above argument holds for any mechanism $(\alloc,\price)$,
the optimal revenue when the agent's type is $\hat{\type}$ should be at least 
$\frac{\hat{\type}}{\type}$ times the optimal revenue when the agent's type is $\type$,
i.e., $\frac{\V(\posterior,\hat{\type})}{\V(\posterior,\type)} \geq \frac{\hat{\type}}{\type}$.
By applying the same argument when reversing the role of $\type$ and $\hat{\type}$, 
we also have that $\frac{\V(\posterior,\hat{\type})}{\V(\posterior,\type)} \leq \frac{\hat{\type}}{\type}$.
Combining the inequalities, we have $\frac{\V(\posterior,\hat{\type})}{\V(\posterior,\type)} = \frac{\hat{\type}}{\type}$, 
and the linearity assumption is satisfied.

\end{appendix}

\end{document}